\pgfplotsset{compat=1.16}
\theoremstyle{definition}
\newtheorem{theorem}{Theorem}[section]
\newtheorem{corollary}[theorem]{Corollary}
\newtheorem{definition}{Definition}[section]
\newtheorem{lemma}[theorem]{Lemma}
\newtheorem{proposition}[theorem]{Proposition}
\newtheorem{remark}[theorem]{Remark}
\def\@seccntformat#1{\@ifundefined{#1@cntformat}%
	{\csname the#1\endcsname\quad}%      default
	{\csname #1@cntformat\endcsname}%    enable individual control
}
\newif\ifShowComments
\def\strutdepth{\dp\strutbox}
\def\druk#1{\strut\vadjust{\kern-\strutdepth
        {\vtop to \strutdepth{%
                \baselineskip\strutdepth\vss
                        \llap{\hbox{#1}\quad}\null}}}}
\title{\bf
%Gini coefficient estimator bias for geometric distributions
%
%Closed-form expression for the Hoover index estimator  bias under gamma, Poisson and geometric distributions
%
On the bias of the Hoover index estimator: Results for the gamma distribution
}
\author{
\text{Roberto Vila}$^{1}$\thanks{Corresponding author: Roberto Vila, email: {rovig161@gmail.com}
%\newline
%%{\it Preprint submitted to Annals of the Institute of Statistical Mathematics on \today}
}
\,\, and
\text{Helton Saulo}$^{1,2}$ 
%\,\,and
%\text{Eduardo Nakano}$^{1}$
\\
{\small $^{1}$ Department of Statistics, University of Brasilia, Brasilia, Brazil}\\
{\small $^{2}$ Department of Economics, Federal University of Pelotas, Pelotas, Brazil}\\
}
\begin{document}
	\maketitle 	
	\begin{abstract}
%This paper examines the Hoover index estimator for gamma populations, revealing bias. We derive a bias expression and propose a bias-corrected estimator, which is evaluated through Monte Carlo simulations.
%
%In this paper we analyze the Hoover index estimator for gamma populations, revealing the presence of bias. An analytical expression for the bias is derived, and a bias-corrected estimator is developed and evaluated using Monte Carlo simulations.
%
The Hoover index is a widely used measure of inequality with an intuitive interpretation, yet little is known about the finite-sample properties of its empirical estimator. In this paper, we derive a simple expression for the expected value of the Hoover index estimator for general non-negative populations, based on Laplace transform techniques and exponential tilting. This unified framework applies to both continuous and discrete distributions. Explicit bias expressions are obtained for gamma
populations, showing that the estimator is generally biased in finite samples. Numerical and simulation results illustrate the magnitude of the bias and its dependence on the underlying distribution and sample size.
	\end{abstract}
	\smallskip
	\noindent
	{\small {\bfseries Keywords.} {Gamma distribution, Hoover index estimator, Gini coefficient estimator, biased estimator.}}
	\\
	{\small{\bfseries Mathematics Subject Classification (2010).} {MSC 60E05 $\cdot$ MSC 62Exx $\cdot$ MSC 62Fxx.}}
%	
%%	\tableofcontents

\section{Introduction}

Measures of inequality play a central role in economics, statistics, and the social sciences, providing quantitative tools to summarize disparities in income, wealth, or other non-negative resources \citep{Cowell2011,Sen1973,Atkinson1970}. Among the most commonly used indices, the Gini coefficient has received extensive theoretical and empirical attention, particularly regarding its estimation and finite-sample behavior \citep{Gini1912,Gastwirth1972,Yitzhaki-Schechtman2013}. By contrast, the Hoover index, also known as the Robin Hood, Pietra, or Schutz index, has been comparatively less studied, despite its appealing interpretation as the proportion of total resources that would need to be redistributed to achieve perfect equality \citep{Hoover1936,Schutz1951,Pietra1915}.

	Formally, the Hoover index is defined in terms of the mean absolute deviation from the population mean.
	
	\begin{definition}
		Let $X$ be a non-negative, non-degenerate random variable with finite mean $\mathbb{E}[X]=\mu>0$. The Hoover index \citep{Hoover1936} is defined as
		\begin{align}\label{Hoover-index}
			H = \frac{\mathbb{E}|X-\mu|}{2\mu}.
		\end{align}
	\end{definition}
	
	The Hoover index can therefore be interpreted as one half of the relative mean absolute deviation from the mean, representing the fraction of total income that must be reallocated to achieve perfect equality. 
	Geometrically, the Hoover index corresponds to the maximum vertical distance between the Lorenz curve $L(p)$ and the line of perfect equality, while the Gini coefficient is equal to twice the area enclosed by these two curves (see Figure \ref{fig:gini-hoover}).
	Its formulation places it within the class of $L^1$-type dispersion measures and highlights its close relationship with other inequality indices based on pairwise differences and order statistics \citep{Cowell2011,Yitzhaki1983}.
%The Hoover index is defined in terms of the mean absolute deviation from the population mean and is therefore closely related to $L^1$-type dispersion measures \citep{Cowell2011,Yitzhaki1983}. 
Its simplicity and intuitive appeal 
have made the Hoover index widely applicable across
diverse fields, including income distribution, regional economics, demography, and environmental studies \citep{Rey-Montouri1999,Haining2003}. However, from a statistical perspective, relatively little is known about the properties of its empirical estimator, especially in finite samples and under specific distributional assumptions.
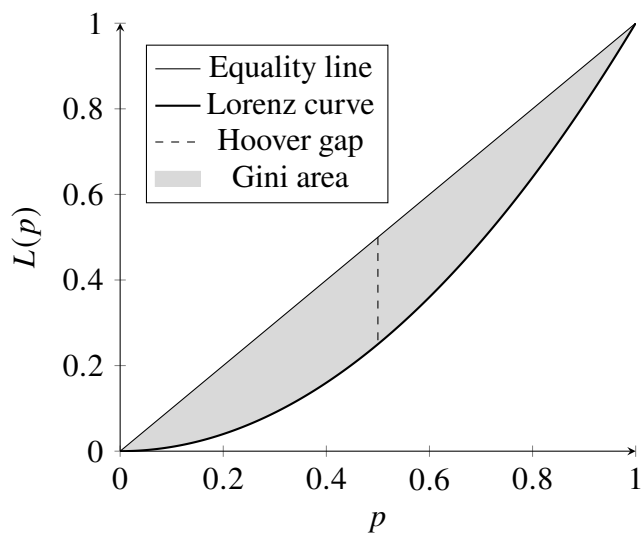
\begin{figure}[H]
	\centering
	\begin{tikzpicture}
		\begin{axis}[
			axis lines=left,
			xlabel={$p$},
			ylabel={$L(p)$},
			xmin=0,xmax=1,
			ymin=0,ymax=1,
			legend style={at={(0.05,0.95)},anchor=north west},
			]
			
			% Equality line
			\addplot[name path=A,domain=0:1,samples=200] {x};
			\addlegendentry{Equality line}
			
			% Lorenz curve
			\addplot[name path=B,domain=0:1,samples=200, thick] {x^2};
			\addlegendentry{Lorenz curve}

			% Hoover gap
\addplot[dashed] coordinates {(0.5,0.5) (0.5,0.25)};
\addlegendentry{Hoover gap}
			
			% Gini area
			\addplot[gray!30] fill between[of=A and B];
			\addlegendentry{Gini area}
			
		\end{axis}
	\end{tikzpicture}
	
	\caption{
		Lorenz curve illustrating the Hoover index and the Gini coefficient.
	}
	\label{fig:gini-hoover}
\end{figure}

The issue of estimator bias is particularly relevant in inequality measurement. It is well documented that the sample Gini coefficient is biased for many common distributions, and a substantial body of literature has been devoted to characterizing and correcting this bias \citep{Deltas2003,Giles2004,Xu2007}. In contrast, analogous results for the Hoover index are scarce, and existing studies are largely limited to asymptotic considerations or numerical investigations.

The main objective of this paper is to fill this gap by providing an exact finite-sample analysis of the Hoover index estimator. We derive a general closed-form representation for its expected value, expressed in terms of Laplace transforms and exponentially tilted distributions. This representation is sufficiently flexible to handle both continuous and discrete populations and highlights structural features such as scale invariance and the role of exponential tilting.

Building on this general result, we derive explicit expressions for the bias of the Hoover index estimator under the gamma distribution. The gamma family is of particular interest because of its widespread use in modeling income, count, and duration data \citep{Johnson1992,Kotz2000}. We conduct a Monte Carlo study to assess the finite-sample behavior of both the original and the bias-corrected estimators. The simulation results indicate that the bias-corrected estimator exhibits markedly improved performance.

%Our findings show that the estimator is, in general, biased, and that the bias persists even for moderate sample sizes. Moreover, the direction and magnitude of the bias depend on the underlying distribution and its parameters, revealing nontrivial differences across population models.

The remainder of the paper is organized as follows. Section \ref{The Hoover index: characterizations and special cases} reviews the definition of the Hoover index and presents several useful characterizations for continuous and discrete distributions. Section \ref{Technical results} establishes technical results based on exponential tilting that are instrumental for the main analysis. Section \ref{Bias} derives a simple expression for the expected value of the Hoover index estimator and specializes it to gamma populations. Section \ref{Illustrative simulation study} presents a simulation study illustrating the theoretical findings. Section \ref{Concluding remarks} presents a discussion of the implications of our results and directions for future research. The appendix concludes with the derivation of simple expressions for the bias of the Hoover index estimator for discrete Poisson and geometric distributions.

\section{Characterizations and special cases of the Hoover index}\label{The Hoover index: characterizations and special cases}

This section presents useful characterizations of the Hoover index and derives closed-form expressions for several important special cases, including gamma, Poisson, and geometric distributions.

\begin{proposition}\label{prop-in}
	The Hoover index of a non-negative, non-degenerate random variable $X$ with finite mean $\mathbb{E}[X]=\mu>0$ can be expressed as
	\begin{align*}
H=	
{1\over \mu}
\int_0^\mu
F(t^-)
{\rm d}t,
	\end{align*}
	where $F(t^-)=\mathbb{P}(X<t)$.
\end{proposition}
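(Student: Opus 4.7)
The plan is to reduce the Hoover index to a one-sided integral of the distribution function by exploiting the identity $\mathbb{E}[X-\mu]=0$ and the tail formula for expectations of non-negative random variables. Specifically, I would start by splitting the absolute value into its positive and negative parts,
\[
|X-\mu| = (\mu-X)^{+} + (X-\mu)^{+}.
\]
Taking expectations and using $\mathbb{E}[(X-\mu)^{+}]-\mathbb{E}[(\mu-X)^{+}] = \mathbb{E}[X-\mu]=0$, I would obtain
\[
\mathbb{E}|X-\mu| \;=\; 2\,\mathbb{E}[(\mu-X)^{+}].
\]

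Next, since $(\mu-X)^{+}$ is non-negative and bounded above by $\mu$, I would apply the standard tail representation
\[
\mathbb{E}[(\mu-X)^{+}] \;=\; \int_{0}^{\infty} \mathbb{P}\bigl((\mu-X)^{+} > s\bigr)\,\mathrm{d}s \;=\; \int_{0}^{\mu} \mathbb{P}(X < \mu - s)\,\mathrm{d}s.
\]
Note that the strict inequality here is exactly what produces $F(t^{-})$ rather than $F(t)$, which is important since $X$ need not be continuous. A change of variable $t=\mu-s$ then converts this into
\[
\int_{0}^{\mu} F(t^{-})\,\mathrm{d}t.
\]

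Dividing by $2\mu$ and inserting the two preceding identities into the definition \eqref{Hoover-index} finishes the argument. The only delicate point is keeping track of the distinction between $\{X<t\}$ and $\{X\le t\}$ at atoms of $X$; this is handled automatically by using the strict-inequality form of the tail formula, and no continuity assumption on $F$ is needed. No step is genuinely hard, so I would not anticipate any real obstacle; the proof is essentially a careful bookkeeping exercise on the positive/negative parts of $X-\mu$.
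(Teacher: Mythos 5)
Your proof is correct, but it takes a genuinely different route from the paper. The paper starts from the integral identity \eqref{basic-identity},
\[
\lvert y_1-y_2\rvert=\int_0^{\infty}\bigl[\mathds{1}_{(y_1,\infty)}(t)+\mathds{1}_{(y_2,\infty)}(t)-2\,\mathds{1}_{(y_1,\infty)}(t)\mathds{1}_{(y_2,\infty)}(t)\bigr]\,{\rm d}t,
\]
applies it with $y_1=X$, $y_2=\mu$, integrates via Tonelli, and then cancels the tail integral against $\mu=\int_0^{\infty}[1-F(t^-)]\,{\rm d}t$. You instead use the decomposition $|X-\mu|=(X-\mu)^{+}+(\mu-X)^{+}$ together with $\mathbb{E}[X-\mu]=0$ to reduce everything to $2\,\mathbb{E}[(\mu-X)^{+}]$, and then apply the tail formula once to the bounded variable $(\mu-X)^{+}$. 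Your argument is shorter and more elementary: it needs only the one-sided tail formula and avoids the final cancellation step (both one-sided expectations are finite since $(\mu-X)^{+}\leqslant\mu$ and $\mathbb{E}[X]<\infty$, so the subtraction is legitimate). What the paper's heavier identity buys is reuse: \eqref{basic-identity} is exactly the tool deployed again in Lemma \ref{lemma-main} for $\mathbb{E}[\,|Y_1-Y_2|\,e^{-Y_1x_1}e^{-Y_2x_2}]$, where your zero-mean cancellation trick is no longer available because of the exponential weights; so the paper's choice is a deliberate unification rather than a necessity here. Your remark about the strict inequality producing $F(t^-)$ is sound, though worth noting that $\int_0^{\mu}\mathbb{P}(X<t)\,{\rm d}t=\int_0^{\mu}\mathbb{P}(X\leqslant t)\,{\rm d}t$ anyway, since the integrands differ only on the (at most countable) set of atoms.
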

\begin{proof}
Using the identity
\begin{align}\label{basic-identity}
	\lvert y_1-y_2\rvert
	=
	\int_0^{\infty}
\Big[
\mathds{1}_{[0,t)}(y_1)
+
\mathds{1}_{(0,t)}(y_2)
-
2\,\mathds{1}_{[0,t)}(y_1)\mathds{1}_{(0,t)}(y_2)
\Big]\,
{\rm d}t,
	\quad 
 y_1\geqslant 0, y_2>0.
\end{align}
where $\mathds{1}_A(\cdot)$ denotes the indicator function of a set $A$, we may write
\begin{align*}
	\lvert X-\mu\rvert
	=
	\int_0^{\infty}
	\Big[
	\mathds{1}_{[0,t)}(X)
	+
	\mathds{1}_{(0,t)}(\mu)
	-
	2\,\mathds{1}_{[0,t)}(X)\mathds{1}_{(0,t)}(\mu)
	\Big]\,
	{\rm d}t.
\end{align*}

Taking expectations on both sides and applying Tonelli’s theorem yields
\begin{align*}
	\mathbb{E}\lvert X-\mu\rvert
	=
	\int_\mu^{\infty}
	\big[1-F(t^-)\big]\,
	{\rm d}t
	+
	\int_0^\mu
	F(t^-)\,
	{\rm d}t 
	=
	\int_0^{\infty}
	\big[1-F(t^-)\big]\,
	{\rm d}t
	+
	\int_0^\mu
	[2F(t^-)-1]
	{\rm d}t .
\end{align*}

Finally, noting that
$
\mu
=
\int_0^{\infty}
[1-F(t^-)]\,
{\rm d}t,
$
the desired result follows.
\end{proof}

\begin{remark}
Note that Proposition \ref{prop-in} does not exclude the case $\mathbb{P}(X=0)\in(0,1)$.
\end{remark}

\begin{proposition}\label{pro-2}
The Hoover index of a discrete random variable $X$ supported on $\{0,1,\ldots\}$ with finite mean $\mathbb{E}[X]=\mu>0$
admits the characterization
\begin{align*}
	H
	=
	{1\over \mu}
	\left[ \,
	\sum_{k=0}^{\lfloor \mu\rfloor-1} F(k)
		+(\mu-\lfloor \mu\rfloor)\,F(\lfloor \mu\rfloor)
	\right],
\end{align*}
where $\lfloor t\rfloor$ denotes the floor function.
\end{proposition}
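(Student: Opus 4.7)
The plan is to apply Proposition~\ref{prop-in}, which writes $H = (1/\mu)\int_0^\mu F(t^-)\,{\rm d}t$, and then reduce the integral to a finite sum by exploiting that the left-continuous CDF $F(t^-)=\mathbb{P}(X<t)$ of a distribution supported on $\{0,1,2,\ldots\}$ is piecewise constant. Specifically, for any integer $k\geq 0$ and every $t\in(k,k+1]$ we have $F(t^-)=\mathbb{P}(X\leq k)=F(k)$, so that $F(t^-)$ equals the constant $F(k)$ almost everywhere on $[k,k+1]$ with respect to Lebesgue measure, and the value at the single endpoint $t=k$ is irrelevant for the integral.

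Setting $m:=\lfloor\mu\rfloor$, I would then split the range of integration at the integer nodes inside $[0,\mu]$ and evaluate each piece directly:
\begin{align*}
\int_0^\mu F(t^-)\,{\rm d}t
= \sum_{k=0}^{m-1}\int_k^{k+1} F(t^-)\,{\rm d}t \;+\; \int_m^\mu F(t^-)\,{\rm d}t
= \sum_{k=0}^{m-1} F(k) \;+\; (\mu-m)\,F(m).
\end{align*}
Dividing by $\mu$ produces exactly the claimed characterization.

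The boundary case $\mu\in\mathbb{N}$ requires no separate treatment: when $\mu=m$, the trailing integral is empty and the factor $\mu-\lfloor\mu\rfloor$ in the proposed formula vanishes automatically, so both sides reduce to $(1/\mu)\sum_{k=0}^{m-1}F(k)$. I do not foresee any genuine obstacle here; the argument is essentially a bookkeeping exercise once Proposition~\ref{prop-in} is available and the piecewise-constant behavior of $F(t^-)$ on the integer lattice is recorded.
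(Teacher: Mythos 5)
Your proof is correct and follows essentially the same route as the paper: both apply Proposition~\ref{prop-in} and then exploit the piecewise-constant nature of $F(t^-)$ on unit intervals to reduce the integral to the stated finite sum (the paper phrases the decomposition via indicator functions over all $k$, you split the integration range directly, but these are the same computation). Your explicit remarks about the endpoint being Lebesgue-negligible and about the case $\mu\in\mathbb{N}$ are harmless additional care, not a different argument.
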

\begin{proof}
Since
\[
F(t^-)=F(\lfloor t\rfloor)=F(k), \qquad t\in[k,k+1),
\]
the function $F(t^-)$ is constant on each interval $[k,k+1)$. It therefore follows that
\begin{align}\label{id-pre}
	\int_{0}^{\mu} F(t^-)\,\mathrm{d}t
	=
	\sum_{k=0}^{\infty}
	\int_{k}^{k+1}
	F(\lfloor t\rfloor)\,
	\mathds{1}_{(t,\infty)}(\mu)\,
	\mathrm{d}t.
\end{align}
Moreover,
\begin{align*}
	\int_{k}^{k+1}
	\mathds{1}_{(t,\infty)}(\mu)\,
	\mathrm{d}t
	=
	\begin{cases}
		1, & k+1\leqslant \mu,\\[2pt]
		\mu-k, & k<\mu<k+1,\\[2pt]
		0, & k\geqslant \mu.
	\end{cases}
\end{align*}
An application of Proposition~\ref{prop-in} together with \eqref{id-pre} completes the proof.
\end{proof}

\begin{proposition}\label{prop-Hoover-index}
The Hoover index of a gamma random variable $X\sim\text{Gamma}(\alpha,\lambda)$
is given by
\begin{align*}
	H
	=
	\frac{\alpha^{\alpha-1}\exp\{-\alpha\}}{\Gamma(\alpha)},
\end{align*}
where $\Gamma(\cdot)$ denotes the (complete) gamma function.
\end{proposition}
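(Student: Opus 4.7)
My plan is to apply Proposition~\ref{prop-in} directly to the gamma distribution and reduce everything to a computation involving the lower incomplete gamma function $\gamma(\alpha,x)=\int_0^x s^{\alpha-1}e^{-s}\,ds$. Since $X\sim\text{Gamma}(\alpha,\lambda)$ is continuous we have $F(t^-)=F(t)=\gamma(\alpha,\lambda t)/\Gamma(\alpha)$, and the mean is $\mu=\alpha/\lambda$. Proposition~\ref{prop-in} therefore gives
\begin{align*}
H=\frac{1}{\mu\,\Gamma(\alpha)}\int_0^{\mu}\gamma(\alpha,\lambda t)\,\mathrm{d}t,
\end{align*}
so the entire problem is reduced to evaluating this single integral explicitly.

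Next I would compute the integral by swapping the order of integration (Tonelli) in the double integral obtained by writing $\gamma(\alpha,\lambda t)=\int_0^{\lambda t}s^{\alpha-1}e^{-s}\,\mathrm{d}s$. After the substitution $u=\lambda t$, Fubini yields
\begin{align*}
\int_0^{\mu}\gamma(\alpha,\lambda t)\,\mathrm{d}t
=\frac{1}{\lambda}\int_0^{\lambda\mu}s^{\alpha-1}e^{-s}(\lambda\mu-s)\,\mathrm{d}s
=\mu\,\gamma(\alpha,\lambda\mu)-\frac{1}{\lambda}\gamma(\alpha+1,\lambda\mu).
\end{align*}
(Integration by parts directly on $\int_0^\mu\gamma(\alpha,\lambda t)\,\mathrm{d}t$ would give the same identity; I find the Fubini route cleaner.)

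The key observation — and the only nontrivial step — is that the mean is tuned so that $\lambda\mu=\alpha$, which is exactly the value at which the recurrence
\begin{align*}
\gamma(\alpha+1,x)=\alpha\,\gamma(\alpha,x)-x^{\alpha}e^{-x}
\end{align*}
produces a cancellation. Plugging $x=\alpha$ into the recurrence and substituting into the previous display makes the two $\gamma(\alpha,\alpha)$ terms annihilate, leaving only $\alpha^{\alpha}e^{-\alpha}/\lambda$. Dividing by $\mu\,\Gamma(\alpha)=\alpha\Gamma(\alpha)/\lambda$ then gives the stated closed form $H=\alpha^{\alpha-1}e^{-\alpha}/\Gamma(\alpha)$.

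I expect no real obstacle beyond remembering to invoke the $\gamma(\alpha+1,\cdot)$ recurrence at $x=\alpha$; the scale parameter $\lambda$ must vanish from the final answer (the Hoover index is scale invariant), and tracking this cancellation is what the recurrence accomplishes. A useful sanity check along the way is to verify the exponential case $\alpha=1$: the formula collapses to $H=e^{-1}$, which matches the well-known Hoover index of the exponential distribution.
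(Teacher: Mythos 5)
Your proof is correct and follows essentially the same route as the paper: both apply Proposition~\ref{prop-in}, reduce the problem to evaluating $\int_0^\mu \gamma(\alpha,\lambda t)\,\mathrm{d}t$, and exploit the incomplete-gamma recurrence at the special point $\lambda\mu=\alpha$ to obtain the cancellation. The only (cosmetic) difference is that you work with the lower incomplete gamma function via a Fubini swap, whereas the paper rewrites the integrand in terms of the upper incomplete gamma function and uses the antiderivative $\int\Gamma(\alpha,u)\,\mathrm{d}u=u\Gamma(\alpha,u)-\Gamma(\alpha+1,u)+C$.
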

\begin{proof}
	A simple calculation shows that
	\begin{align*}
		\int_0^\mu
		F(t^-)
		{\rm d}t
		=
		\int_0^\mu
		{\gamma(\alpha,\lambda t)\over \Gamma(\alpha)}
		{\rm d}t
		=
%		\mu
%		-
%		\int_0^\mu
%		{\Gamma(\alpha,\lambda t)\over \Gamma(\alpha)}
%		{\rm d}t
%		=
		\mu
		-
		{1\over\lambda\Gamma(\alpha)}
		\int_0^{\lambda\mu}
		{\Gamma(\alpha,s)}
		{\rm d}s,
\end{align*}
where we have applied the change of variables $s=\lambda t$. Using the identity
 $\int \Gamma(\alpha,u){\rm d}u=u\Gamma(\alpha,u)-\Gamma(\alpha+1,u)+C$, the expression on the right-hand side becomes
\begin{align*}
		\mu
		-
		{1\over\lambda\Gamma(\alpha)}
		\left[\lambda\mu\Gamma(\alpha,\lambda\mu)-\Gamma(\alpha+1,\lambda\mu)
		+
		\Gamma(\alpha+1)
		\right].
%		\\[0,2cm]
%		=
%		{1\over\lambda\Gamma(\alpha)}
%		\left[\Gamma(\alpha+1,\lambda\mu)-\lambda\mu\Gamma(\alpha,\lambda\mu)
%		\right]
	\end{align*}
Finally, by using the identities
$\Gamma(\alpha+1,x)=\alpha\,\Gamma(\alpha,x)+x^{\alpha}\exp\{-x\}$
and
$\Gamma(\alpha+1)=\alpha\,\Gamma(\alpha)$,
together with the relation $\mu=\alpha/\lambda$, the result follows directly from
Proposition \ref{prop-in}.
%	
%	\begin{align*}
%		=
%		{1\over\lambda \Gamma(\alpha)}
%		\left[(\alpha-\lambda\mu)\Gamma(\alpha,\lambda\mu)+(\lambda\mu)^\alpha\exp(-\lambda\mu)\right]
%		=
%		{\alpha\over\lambda \Gamma(\alpha)}
%		\left[\alpha^{\alpha-1}\exp(-\alpha)\right]
%	\end{align*}
\end{proof}

\begin{remark}
Observe that the quantity $H$ in Proposition~\ref{prop-Hoover-index} can be expressed as
\[
H = f_{\lambda X}(\alpha),
\]
where $f_{\lambda X}$ denotes the probability density function of
$\lambda X \sim \text{Gamma}(\alpha,1)$.

\end{remark}

\begin{proposition}\label{Hoover-Poisson}
The Hoover index of a Poisson random variable $X\sim\text{Poisson}(\lambda)$ is given by
	\begin{align*}
		H=
		{1\over\lambda}
		\left[
		\sum_{k=0}^{\lfloor \lambda\rfloor-1}
			{\Gamma(k+1,\lambda)\over\Gamma(k+1)}+(\lambda-\lfloor\lambda\rfloor)
			{\Gamma(\lfloor \lambda\rfloor+1,\lambda)\over\Gamma(\lfloor \lambda\rfloor+1)}
		\right].
	\end{align*}
\end{proposition}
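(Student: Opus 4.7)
The plan is to apply Proposition \ref{pro-2} directly, using the classical identity relating the Poisson cumulative distribution function to the upper incomplete gamma function. Since $X \sim \text{Poisson}(\lambda)$ has $\mu = \mathbb{E}[X] = \lambda$, Proposition \ref{pro-2} gives
\[
H = \frac{1}{\lambda}\left[\sum_{k=0}^{\lfloor \lambda \rfloor - 1} F(k) + (\lambda - \lfloor \lambda \rfloor)\, F(\lfloor \lambda \rfloor)\right],
\]
so the whole task reduces to rewriting $F(k)$ in the required gamma form.

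The key step is establishing the well-known identity
\[
F(k) = \mathbb{P}(X \leqslant k) = \sum_{j=0}^{k} \frac{\lambda^{j} e^{-\lambda}}{j!} = \frac{\Gamma(k+1, \lambda)}{\Gamma(k+1)}, \qquad k \in \{0,1,2,\ldots\},
\]
where $\Gamma(s,x) = \int_{x}^{\infty} t^{s-1} e^{-t}\,{\rm d}t$ denotes the upper incomplete gamma function. I would prove this either by induction on $k$, using integration by parts on $\Gamma(k+1,\lambda)$ to obtain the recursion $\Gamma(k+1,\lambda) = k\,\Gamma(k,\lambda) + \lambda^{k} e^{-\lambda}$, or equivalently by the probabilistic argument that $\mathbb{P}(X \leqslant k)$ equals the probability that the $(k+1)$-st arrival time of a unit-rate Poisson process (which is $\text{Gamma}(k+1,1)$) exceeds $\lambda$. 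Since $k$ is an integer, $\lfloor k \rfloor = k$, so $F(k) = \Gamma(\lfloor k \rfloor + 1, \lambda)/\Gamma(\lfloor k \rfloor + 1)$.

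Substituting this expression for $F(k)$ and for $F(\lfloor \lambda \rfloor)$ into the displayed formula from Proposition \ref{pro-2} yields the stated characterization. There is no serious obstacle here: the only nontrivial ingredient is the Poisson–gamma CDF identity, which is standard, and the remainder is mechanical substitution.
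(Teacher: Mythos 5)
Your proposal is correct and follows exactly the same route as the paper: apply Proposition~\ref{pro-2} with $\mu=\lambda$ and substitute the standard identity $F(k)=\Gamma(k+1,\lambda)/\Gamma(k+1)$ for the Poisson CDF. The paper states this in one line without justifying the Poisson--gamma identity, whereas you supply the (correct) justification, so there is nothing to fix.
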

\begin{proof}
Since $F(k)=\Gamma(k+1,\lambda)/\Gamma(k+1)$ and $\mu=\lambda$, the result follows directly from Proposition \ref{pro-2}.
\end{proof}

\begin{proposition}\label{Hoover-geo}
The Hoover index of a geometric random variable $X\sim\text{Geometric}(p)$,
supported on $\{0,1,\ldots\}$, is given by
	\begin{align*}
	H=p\left(1+\left\lfloor {1-p\over p}\right\rfloor\right)(1-p)^{\lfloor {1-p\over p}\rfloor}.
\end{align*}
\end{proposition}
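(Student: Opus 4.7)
The plan is to apply Proposition~\ref{pro-2} directly, since the geometric distribution is supported on $\{0,1,\ldots\}$ and its CDF has a clean closed form. For $X\sim\text{Geometric}(p)$ on $\{0,1,\ldots\}$, I would recall that
\[
F(k)=1-(1-p)^{k+1},\qquad \mu=\mathbb{E}[X]=\frac{1-p}{p},
\]
and abbreviate $m:=\lfloor\mu\rfloor=\lfloor(1-p)/p\rfloor$. Then Proposition~\ref{pro-2} gives
\[
\mu H=\sum_{k=0}^{m-1}F(k)+(\mu-m)F(m).
\]

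Next I would compute the two pieces separately. For the sum, using the geometric series,
\[
\sum_{k=0}^{m-1}F(k)=m-(1-p)\sum_{k=0}^{m-1}(1-p)^{k}=m-\frac{(1-p)\bigl(1-(1-p)^{m}\bigr)}{p},
\]
while the boundary term is simply $(\mu-m)\bigl[1-(1-p)^{m+1}\bigr]$. The main step is then to add these and simplify.

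The key algebraic move, and the only place where anything non-routine happens, is the cancellation produced by the identity $\mu=(1-p)/p$. Expanding and collecting terms, the integer parts $m$ and the linear pieces in $\mu$ combine so that
\[
\mu H=(1-p)^{m+1}\!\left[\frac{1}{p}-\mu+m\right]=(1-p)^{m+1}\!\left[\frac{p}{p}+m\right]=(1+m)(1-p)^{m+1}.
\]
Dividing by $\mu=(1-p)/p$ absorbs one factor of $(1-p)$ and introduces the factor $p$, yielding the stated formula $H=p(1+m)(1-p)^{m}$.

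The main obstacle, if any, is purely bookkeeping: keeping track of the $\mu-m$ fractional-part contribution and ensuring the cancellation $1/p-(1-p)/p=1$ is applied at the right moment. No analytic subtlety is required beyond the closed form for the geometric CDF and the identity for $\mu$.
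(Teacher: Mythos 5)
Your proposal is correct and follows exactly the paper's route: plug the geometric CDF $F(k)=1-(1-p)^{k+1}$ and mean $\mu=(1-p)/p$ into Proposition~\ref{pro-2}, evaluate the geometric series to get $\sum_{k=0}^{m-1}F(k)=m-\mu\bigl[1-(1-p)^{m}\bigr]$, and simplify. The algebra you carry out, including the cancellation $1/p-\mu=1$, is the same computation the paper leaves implicit after stating the partial-sum identity.
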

\begin{proof}
Since $F(k)=1-(1-p)^{k+1}$ and $\mu=(1-p)/p$, and noting that
\[
\sum_{k=0}^{\lfloor \mu\rfloor-1} F(k)
=
\lfloor \mu\rfloor
-
\frac{1-p}{p}\bigl[1-(1-p)^{\lfloor \mu\rfloor}\bigr],
\]
an application of Proposition~\ref{pro-2} completes the proof.
\end{proof}

\section{Technical results}\label{Technical results}

The following technical results play a key role in the development of the results presented in Section \ref{Bias} and are included here as essential intermediate steps.

\begin{lemma}\label{lemma-main}
	If $Y_1$ and $Y_2$ are two independent, non-degenerate, nonnegative random variables, then, 
	for $x_1,x_2>0$,
	\begin{multline*}
		\mathbb{E}[ \vert Y_1-Y_2\vert \exp\{-Y_1 x_1\}\exp\left\{-Y_2 x_2\right\}]
		\\[0,2cm]
		=
		\mathscr{L}_{Y_1}(x_1)\mathscr{L}_{Y_2}(x_2)
		\left\{
		\mathbb{E}[Y_{1,x_1}]
		+
		\mathbb{E}[Y_{2,x_2}]
		-
		2
		\int_0^{\infty}
		[1-F_{Y_{1,x_1}}(t^-)]
		[1-F_{Y_{2,x_2}}(t^-)]
		{\rm d}t
		\right\},
	\end{multline*}
	where  \(\mathscr{L}_{Y_i}(x_i)=\mathbb{E}\left[\exp\{-x_iY_i\}\right]\) denotes the Laplace transform of \(Y_i\), and, for each \(x>0\),
	\begin{align*}
		F_{Y_{i,x_i}}(t)
		=
		{
			\mathbb{E}[
			\mathds{1}_{[0,t]}(Y_i)\exp\{-Y_ix_i\}
			]
			\over 
			\mathscr{L}_{Y_i}(x_i)
		},
		\quad 
		t\geqslant 0,
		\quad i=1,2,
	\end{align*}
	is the cumulative distribution function (CDF) of an exponentially tilted (or Esscher-transformed) random variable \(Y_{i,x}\) \citep{Butler2007}
\end{lemma}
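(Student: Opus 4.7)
The plan is to imitate the proof of Proposition \ref{prop-in}, carrying the exponential weights along. Starting from the identity \eqref{basic-identity} with $y_1=Y_1, y_2=Y_2$, multiplying through by $\exp\{-Y_1 x_1 - Y_2 x_2\}$ and taking expectations gives
\begin{multline*}
\mathbb{E}[\vert Y_1-Y_2\vert\exp\{-Y_1 x_1 - Y_2 x_2\}]
\\
=\mathbb{E}\int_0^\infty \bigl[\mathds{1}_{(Y_1,\infty)}(t)+\mathds{1}_{(Y_2,\infty)}(t)-2\,\mathds{1}_{(Y_1,\infty)}(t)\mathds{1}_{(Y_2,\infty)}(t)\bigr]\exp\{-Y_1 x_1-Y_2 x_2\}\,{\rm d}t.
\end{multline*}
Since the integrand in $(t, Y_1, Y_2)$ is non-negative (the bracket equals $a(1-b)+b(1-a)\geq 0$ with $a,b\in\{0,1\}$), Tonelli's theorem justifies interchanging $\mathbb{E}$ and $\int_0^\infty$.

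Next, using the independence of $Y_1$ and $Y_2$, each of the three integrand expectations factors as a product of a full Laplace transform and a partial one, for instance
\begin{align*}
\mathbb{E}\bigl[\mathds{1}_{(Y_1,\infty)}(t)\exp\{-Y_1 x_1-Y_2 x_2\}\bigr]
&=\mathscr{L}_{Y_2}(x_2)\,\mathbb{E}\bigl[\mathds{1}_{(Y_1,\infty)}(t)\exp\{-Y_1 x_1\}\bigr]\\
&=\mathscr{L}_{Y_1}(x_1)\mathscr{L}_{Y_2}(x_2)\,F_{Y_{1,x_1}}(t^-),
\end{align*}
where the last equality uses the stated definition of the Esscher-tilted CDF. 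An analogous identity holds for the second term, while the third produces the product $F_{Y_{1,x_1}}(t^-)F_{Y_{2,x_2}}(t^-)$. Substituting yields the intermediate representation
\begin{align*}
\mathscr{L}_{Y_1}(x_1)\mathscr{L}_{Y_2}(x_2)\int_0^\infty\bigl[F_{Y_{1,x_1}}(t^-)+F_{Y_{2,x_2}}(t^-)-2\,F_{Y_{1,x_1}}(t^-)F_{Y_{2,x_2}}(t^-)\bigr]\,{\rm d}t.
\end{align*}

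The last step is the algebraic identity
\begin{align*}
F_1+F_2-2F_1F_2=(1-F_1)+(1-F_2)-2(1-F_1)(1-F_2),
\end{align*}
applied inside the integral, together with the tail-expectation formula $\mathbb{E}[Y_{i,x_i}]=\int_0^\infty[1-F_{Y_{i,x_i}}(t^-)]\,{\rm d}t$ for non-negative random variables, which immediately gives the stated form. The one point to be careful about is that the three pieces of the intermediate integrand are individually non-integrable (each $F_{Y_{i,x_i}}(t^-)\to 1$ as $t\to\infty$), so the bracket must be kept together until the rewriting has been performed; after it, each of the three resulting integrals converges (the first two being tilted means and the third being dominated by either of them). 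Apart from this mild cancellation issue, the argument is essentially a Laplace-weighted transcription of the proof of Proposition \ref{prop-in}.
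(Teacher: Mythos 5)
Your proof is correct and follows essentially the same route as the paper's: identity \eqref{basic-identity} plus Tonelli, factorization by independence into the tilted CDFs, the algebraic rewrite $F_1+F_2-2F_1F_2=(1-F_1)+(1-F_2)-2(1-F_1)(1-F_2)$, and the tail-expectation formula $\mathbb{E}[Y_{i,x_i}]=\int_0^{\infty}[1-F_{Y_{i,x_i}}(t^-)]\,{\rm d}t$. Your closing remark about keeping the three non-integrable pieces bracketed together until after the rewriting is a point the paper leaves implicit, and is a worthwhile addition.
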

\begin{proof}
	Applying identity~\eqref{basic-identity} and using independence, we obtain
	\begin{align*}
		\mathbb{E}\big[ \lvert Y_1-Y_2\rvert & \exp\{-Y_1 x_1\}\exp\{-Y_2 x_2\}\big]
		\\[0.2cm]
		&=
		\int_0^{\infty}
		\Big\{
		\mathbb{E}\left[
		\mathds{1}_{[0,t)}(Y_1)\exp\{-Y_1x_1\}
		\right]
		\mathscr{L}_{Y_2}(x_2)
		+
		\mathbb{E}\left[
		\mathds{1}_{[0,t)}(Y_2)\exp\{-Y_2x_2\}
		\right]
		\mathscr{L}_{Y_1}(x_1)
		\\[0,2cm]
		&
		-2
		\mathbb{E}\left[
		\mathds{1}_{[0,t)}(Y_1)\exp\{-Y_1x_1\}
		\right]
		\mathbb{E}\left[
		\mathds{1}_{[0,t)}(Y_2)\exp\{-Y_2x_2\}
		\right]
		\Big\}
		\,{\rm d}t .
	\end{align*}
	
	Introducing the distribution functions $F_{Y_{i,x_i}}$, this expression can be rewritten as
	\begin{multline*}
		\mathbb{E}\!\left[ \lvert Y_1-Y_2\rvert \exp\{-Y_1 x_1\}\exp\{-Y_2 x_2\}\right]
		\\[0.2cm]
		=
		\mathscr{L}_{Y_1}(x_1)\mathscr{L}_{Y_2}(x_2)
		\int_0^{\infty}
		\Big\{
		F_{Y_{1,x_1}}(t^-)
		+
		F_{Y_{2,x_2}}(t^-)
		-
		2
		F_{Y_{1,x_1}}(t^-)
		F_{Y_{2,x_2}}(t^-)
		\Big\}
		\,{\rm d}t .
	\end{multline*}
	
	Finally, using the identity
	$
	\mathbb{E}[Y_{i,x_i}]
	=
	\int_{0}^{\infty}[1-F_{Y_{i,x_i}}(t^-)]\,{\rm d}t,
	$
	the result follows.
\end{proof}

\begin{proposition}\label{lemma-main-1}
	Let $Y_1$ and $Y_2$ be two independent, non-negative discrete random variables supported on
	$\{0,1,\ldots\}$. Then, for $x_1,x_2>0$,
	\begin{multline*}
		\mathbb{E}[ \vert Y_1-Y_2\vert \exp\{-Y_1 x_1\}\exp\left\{-Y_2 x_2\right\}]
		\\[0,2cm]
		=
		\mathscr{L}_{Y_1}(x_1)\mathscr{L}_{Y_2}(x_2)
		\left\{
		\mathbb{E}[Y_{1,x_1}]
		+
		\mathbb{E}[Y_{2,x_2}]
		-
		2
		\sum_{k=0}^{\infty}
		[1-F_{Y_{1,x_1}}(k)]
		[1-F_{Y_{2,x_2}}(k)]
		\right\}.
	\end{multline*}
\end{proposition}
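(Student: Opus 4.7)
The plan is to derive Proposition \ref{lemma-main-1} as the lattice specialization of Lemma \ref{lemma-main}, which applies verbatim since $\{0,1,\ldots\}\subset[0,\infty)$. The only task remaining after invoking that lemma is to convert the tail integral $\int_0^\infty [1-F_{Y_{1,x_1}}(t^-)][1-F_{Y_{2,x_2}}(t^-)]\,{\rm d}t$ into the discrete sum $\sum_{k=0}^\infty [1-F_{Y_{1,x_1}}(k)][1-F_{Y_{2,x_2}}(k)]$ appearing on the right-hand side of the statement; the terms $\mathscr{L}_{Y_i}(x_i)$ and $\mathbb{E}[Y_{i,x_i}]$ carry over unchanged.

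To execute the conversion I would first observe that exponential tilting preserves the support: the tilted probability mass $\mathbb{P}(Y_i=k)\exp\{-kx_i\}/\mathscr{L}_{Y_i}(x_i)$ vanishes off the integers, so each $Y_{i,x_i}$ is supported on $\{0,1,\ldots\}$, its CDF is a step function with jumps only at non-negative integers, and its left-continuous version satisfies $F_{Y_{i,x_i}}(t^-)=F_{Y_{i,x_i}}(k)$ for every $t\in(k,k+1]$. I would then split the range of integration as $\sum_{k=0}^\infty\int_k^{k+1}$; on the open interval $(k,k+1)$ both factors of the integrand reduce to the constants $1-F_{Y_{i,x_i}}(k)$, while the single endpoint $t=k$ contributes zero Lebesgue measure. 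Each sub-integral therefore collapses to $[1-F_{Y_{1,x_1}}(k)][1-F_{Y_{2,x_2}}(k)]$, and summing over $k\geqslant 0$ produces the stated identity. This is exactly the piecewise-constant mechanism already exploited in Proposition \ref{pro-2}.

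I do not foresee any substantive obstacle: the argument is a routine piecewise-constant integration once Lemma \ref{lemma-main} is in hand. The only delicate bookkeeping is keeping track of the left-limit convention at the jump points, which is immaterial under Lebesgue measure and merely dictates that the resulting summation starts at $k=0$.
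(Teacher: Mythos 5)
Your proof is correct; note that the paper states Proposition~\ref{lemma-main-1} without any proof at all. Your route --- invoking Lemma~\ref{lemma-main} and collapsing the tail integral over each unit interval via the piecewise constancy of the integer-supported tilted CDFs, exactly the mechanism of Proposition~\ref{pro-2} --- is evidently the intended (omitted) derivation.
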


\section{Expectation of the Hoover index estimator}\label{Bias}

The main goal of this section is to obtain a simple expression for the expected value of the Hoover index estimator $\widehat{H}$ for population random variables with positive support (see Theorem \ref{main-theorem}). This result then allows us to derive the estimator’s bias in the gamma case (Corollary \ref{main-corollary}). The Hoover index estimator is defined by
\begin{align}\label{gini-estimadtor-def-hoover}
	\widehat{H}
	=
		\displaystyle
	{1\over 2} 
	\left[\dfrac{\displaystyle\sum_{i=1}^{n}\vert X_i-\overline{X}\vert}{\displaystyle\sum_{i=1}^{n} X_i}\right] \mathds{1}_{\left\{\sum_{i=1}^{n} X_i>0\right\}},
	\quad 
	n\in\mathbb{N}, \, n\geqslant 2,
\end{align}
where $\overline{X}=\sum_{i=1}^{n}X_i/n$ is the sample mean and $X_1, X_2,\ldots,X_n$ are independent, identically distributed (i.i.d.) observations from the population $X$.

	\begin{theorem}\label{main-theorem}
Let $X_1, X_2, \ldots$ be independent copies of a non-negative, non-degenerate random variable $X$ with finite mean $\mathbb{E}[X]=\mu>0$. The following holds:
\begin{align*}
		\mathbb{E}\left[\widehat{H}\right]
	=
	{1\over 2}
	\int_0^{\infty}
\mathscr{L}_{X}^{n}(x)
\left\{
\mathbb{E}\left[Y_{1,{x\over n-1}}\right]
+
\mathbb{E}\left[Y_{2,x}\right]
-
2
\int_0^{\infty}
\left[1-F_{Y_{1,{x\over n-1}}}(t^-)\right]
\left[1-F_{Y_{2,x}}(t^-)\right]
{\rm d}t
\right\}
{\rm d}x,
\end{align*}
where \(\mathscr{L}_X(z)=\mathbb{E}\left[\exp\{-zX\}\right]\) denotes the Laplace transform of \(X\).
For each \(x>0\), $Y_{1,{x/(n-1)}}$ and $Y_{2,x}$ are
the exponentially tilted (or Esscher-transformed) \citep{Butler2007} versions of $(n-1)X_1$ and $\sum_{j=2}^n X_j$, respectively; whose CDFs are given by
\begin{align*}
	F_{Y_{1,{x\over n-1}}}(t)
	=
	\frac{\mathbb{E}\left[\exp\{-xX\} \mathds{1}_{[0,{t\over n-1}]}(X)\right]}
	{\mathscr{L}_X(x)},
\quad 
	F_{Y_{2,x}}(t)
	=
	\frac{
		\mathbb{E}\left[\exp\{-x\sum_{j=2}^n X_j\} \mathds{1}_{[0,t]}(\sum_{j=2}^n X_j)\right]
	}
	{\mathscr{L}_X^{n-1}(x)},
	\quad t\geqslant 0,
\end{align*}
respectively.
In the above, we implicitly assume that all Lebesgue-Stieltjes and improper integrals involved are well defined.
	\end{theorem}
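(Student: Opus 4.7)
The plan is to dispose of the random denominator in $\widehat{H}$ via the Laplace identity $1/s=\int_0^{\infty} e^{-xs}\,{\rm d}x$, valid for $s>0$ (which holds almost surely in the non-trivial branch of \eqref{gini-estimadtor-def-hoover}). This converts the ratio into an integral with only exponential weights, and the resulting weighted absolute-difference expectation is precisely the object that Lemma \ref{lemma-main} is designed to evaluate. Concretely, I would first write
\begin{align*}
\widehat{H}
=\frac{1}{2}\int_0^{\infty} \sum_{i=1}^n \lvert X_i-\overline{X}\rvert\,\exp\!\Bigl\{-x\sum_{j=1}^{n}X_j\Bigr\}{\rm d}x,
\end{align*}
take expectations, and apply Tonelli's theorem (legitimate since all integrands are non-negative) to pull the $x$-integral outside.

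Next I would exploit the i.i.d.\ symmetry of $X_1,\ldots,X_n$: the exponential weight $\exp\{-x\sum_j X_j\}$ is symmetric in the indices, so each of the $n$ summands contributes equally and the sum over $i$ collapses to $n$ times the $i=1$ term. The algebraic identity $X_1-\overline{X}=\frac{1}{n}[(n-1)X_1-\sum_{j=2}^n X_j]$ then cancels this factor of $n$, leaving the cleaner expectation
\begin{align*}
\mathbb{E}\!\left[\bigl\lvert (n-1)X_1-\textstyle\sum_{j=2}^n X_j\bigr\rvert \exp\!\Bigl\{-x\sum_{j=1}^n X_j\Bigr\}\right].
\end{align*}

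I would then set $Y_1=(n-1)X_1$ and $Y_2=\sum_{j=2}^n X_j$; these are independent and non-negative, and the exponent splits as $-\frac{x}{n-1}Y_1-xY_2$ using $\sum_{j=1}^n X_j=\frac{Y_1}{n-1}+Y_2$. This is exactly the input required by Lemma \ref{lemma-main} with $x_1=x/(n-1)$ and $x_2=x$. Applying the lemma produces the prefactor $\mathscr{L}_{Y_1}(x/(n-1))\mathscr{L}_{Y_2}(x)$, which collapses to $\mathscr{L}_X^n(x)$ since $\mathscr{L}_{(n-1)X_1}(x/(n-1))=\mathscr{L}_X(x)$ and $\mathscr{L}_{\sum_{j\geq 2}X_j}(x)=\mathscr{L}_X^{n-1}(x)$. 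To match the CDFs stated in the theorem, I would rewrite the tilted distribution of $Y_1$ by substituting the indicator $\mathds{1}_{(0,t]}((n-1)X_1)=\mathds{1}_{(0,t/(n-1)]}(X_1)$, which delivers exactly the announced expression for $F_{Y_{1,x/(n-1)}}$; the identification of $F_{Y_{2,x}}$ is immediate.

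The main obstacle is essentially bookkeeping: one must justify the Tonelli interchange (covered by the integrability proviso at the end of the statement) and track carefully how the rescaling $X_1\mapsto(n-1)X_1$ propagates through both the Laplace transform and the tilted CDF. Once these identifications are in place, the theorem is a direct substitution into Lemma \ref{lemma-main}.
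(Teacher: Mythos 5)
Your proposal is correct and follows essentially the same route as the paper: the Laplace identity for the random denominator, Tonelli, the decomposition $X_i-\overline{X}=\frac{1}{n}[(n-1)X_i-\sum_{j\neq i}X_j]$ combined with i.i.d.\ symmetry to reduce to the $i=1$ term, and then a direct application of Lemma \ref{lemma-main} with $x_1=x/(n-1)$, $x_2=x$. The only (immaterial) difference is ordering — the paper writes the $(n-1)X_i$ decomposition inside the sum first and then invokes equality in distribution, whereas you collapse the sum by symmetry first.
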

\begin{proof}
	By using the well-known identity
$
	{z}
	\int_{0}^{\infty}\exp(-z x){\rm d}x
	=
	1,
$
$z>0$,
with $z=\sum_{i=1}^{n} X_i$, we have
\begin{align}
\mathbb{E}\left[\widehat{H}\right]  
	&=
	\mathbb{E}\left[\widehat{H}\mathds{1}_{\left\{\sum_{i=1}^{n}X_i>0\right\}}\right]
=
	{1\over 2} \,
	\mathbb{E}\left[\sum_{i=1}^{n}\vert X_i-\overline{X}\vert \int_{0}^{\infty}\exp\left\{-\left(\displaystyle\sum_{i=1}^{n} X_i\right) x\right\}{\rm d}x\right]
	\nonumber
%	\\[0,2cm]
%	&=
%	{1\over n}
%	\sum_{i=1}^{n}
%	\mathbb{E}\left[\, 
%	\left\vert (n-1)X_i-\sum_{\substack{j=1\\ j\neq i}}^n X_j\right\vert \int_{0}^{\infty}\exp(-X_i x)\exp\left\{-\left(\displaystyle\sum_{\substack{j=1\\ j\neq i}}^n X_j\right) x\right\}{\rm d}x\right]
%	\nonumber
	\\[0,2cm]
	&=
	{1\over 2 n}
\sum_{i=1}^{n}
\int_{0}^{\infty}
\mathbb{E}\left[\, 
\left\vert (n-1)X_i-\sum_{\substack{j=1\\ j\neq i}}^n X_j\right\vert \exp\left\{-(n-1)X_i \left({x\over n-1}\right) \right\}\exp\left\{-\left(\displaystyle\sum_{\substack{j=1\\ j\neq i}}^n X_j\right) x\right\}\right]
{\rm d}x,
%	\nonumber
%\\[0,2cm]
%&=
%	{1\over n}
%\sum_{i=1}^{n}
%\int_{0}^{\infty}
%\mathbb{E}\left[\, 
%\left\vert Y_{i,1}-Y_{i,2}\right\vert \exp(-Y_{i,1} x)\exp\left(-Y_{i,2} x\right)\right]
%{\rm d}x.
	\label{eq-1} 
\end{align}
where the last equality follows from Tonelli's theorem, which allows us to commute the order of integration when the integrand is a non-negative measurable function.
%where, for any $i=1,\ldots,n$,
%\begin{align*}
%Y_{i,1}=(n-1)X_i \quad \text{and} \quad Y_{i,2}=\sum_{\substack{j=1\\ j\neq i}}^n X_j.
%\end{align*}
Since $X_1,X_2,\ldots$ are i.i.d, it is clear that, for each $i=1,\ldots,n$, the variables $(n-1)X_i$ and $\sum_{\substack{j=1\\ j\neq i}}^n X_j$ are independent, and
\begin{align*}
(n-1)X_i\stackrel{d}{=}(n-1)X_1,
\quad
\sum_{\substack{j=1\\ j\neq i}}^n X_j\stackrel{d}{=}\sum_{j=2}^n X_j,
\end{align*}
where $\stackrel{d}{=}$ denotes equality in distribution. Therefore, from \eqref{eq-1} we have
\begin{align}\label{integral-00}
\mathbb{E}\left[\widehat{H}\right]
	=
	{1\over 2} \,
	\int_{0}^{\infty}
	\mathbb{E}\left[\, 
	\left\vert (n-1)X_1-\sum_{j=2}^n X_j\right\vert 
	\exp\left\{-(n-1)X_1\left({x\over n-1}\right)\right\}
	\exp\left\{-\left(\sum_{j=2}^n X_j\right) x\right\}\right]
	{\rm d}x.
\end{align}

By using Lemma \ref{lemma-main}, we get
\begin{multline}\label{integral-001}
		\mathbb{E}\left[ \, \left\vert (n-1)X_1-\sum_{j=2}^n X_j\right\vert \exp\{-(n-1)X_1 x_1\}\exp\left\{-\sum_{j=2}^n X_j x_2\right\}\right]
\\[0,2cm]
=
\mathscr{L}_{X}^{n}(x)
\left\{
\mathbb{E}\left[Y_{1,{x\over n-1}}\right]
+
\mathbb{E}\left[Y_{2,x}\right]
-
2
\int_0^{\infty}
\left[1-F_{Y_{1,{x\over n-1}}}(t^-)\right]
\left[1-F_{Y_{2,x}}(t^-)\right]
{\rm d}t
\right\},
\end{multline}
where $Y_{1,{x/(n-1)}}$ and $Y_{2,x}$ denote the exponentially tilted versions of $(n-1)X_1$ and $\sum_{j=2}^n X_j$, respectively.

Finally, by combining \eqref{integral-00} and \eqref{integral-001}, the proof follows.
\end{proof}

	\begin{corollary}\label{main-theorem-coro}
	If $X_1, X_2, \ldots$ are independent copies of a non-negative discrete  random variable $X$ with finite mean $\mathbb{E}[X]=\mu>0$, then 
	\begin{align*}
		\mathbb{E}\left[\widehat{H}\right]
		=
		{1\over 2}
		\int_0^{\infty}
		\mathscr{L}_{X}^{n}(x)
		\left\{
		\mathbb{E}\left[Y_{1,{x\over n-1}}\right]
		+
		\mathbb{E}\left[Y_{2,x}\right]
		-
		2
		\sum_{k=0}^{\infty}
		\left[1-F_{Y_{1,{x\over n-1}}}(k)\right]
		\left[1-F_{Y_{2,x}}(k)\right]
		\right\}
		{\rm d}x.
	\end{align*}
	\end{corollary}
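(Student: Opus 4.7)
The plan is to follow the proof of Theorem \ref{main-theorem} essentially verbatim, but substitute Proposition \ref{lemma-main-1} for Lemma \ref{lemma-main} at the decisive step. Since $X$ is supported on $\{0,1,\ldots\}$, both $(n-1)X_1$ and $\sum_{j=2}^n X_j$ are non-negative integer-valued random variables: the second takes all values in $\{0,1,\ldots\}$, while the first is concentrated on the sub-lattice $\{0,n-1,2(n-1),\ldots\}\subseteq\{0,1,\ldots\}$. Either way, both are supported on $\{0,1,\ldots\}$, so the hypotheses of Proposition \ref{lemma-main-1} are met.

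First, I would reproduce the reduction carried out in Theorem \ref{main-theorem}: start from $z\int_0^\infty \exp(-zx)\,{\rm d}x=1$ with $z=\sum_{i=1}^n X_i$ (on the event $\sum_i X_i>0$), invoke Tonelli to exchange expectation and integration, and exploit the i.i.d. structure to collapse the sum over $i$ into a single representative term. This yields equation \eqref{integral-00} unchanged, namely
\begin{align*}
\mathbb{E}\!\left[\dfrac{\sum_{i=1}^{n}\vert X_i-\overline{X}\vert}{\sum_{i=1}^{n} X_i}\right]
=
\int_0^\infty \mathbb{E}\!\left[\,\Big\vert (n-1)X_1-\textstyle\sum_{j=2}^n X_j\Big\vert \exp\!\left\{-(n-1)X_1\tfrac{x}{n-1}\right\}\exp\!\left\{-\!\left(\textstyle\sum_{j=2}^n X_j\right)x\right\}\right]{\rm d}x.
\end{align*}

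Second, I would apply Proposition \ref{lemma-main-1} to the inner expectation with the assignment $Y_1=(n-1)X_1$, $Y_2=\sum_{j=2}^n X_j$, $x_1=x/(n-1)$ and $x_2=x$. A brief calculation identifies the Laplace factors: $\mathscr{L}_{Y_1}(x/(n-1))=\mathbb{E}[\exp\{-X_1 x\}]=\mathscr{L}_X(x)$ and $\mathscr{L}_{Y_2}(x)=\mathscr{L}_X^{n-1}(x)$ by independence, so their product equals $\mathscr{L}_X^n(x)$. Proposition \ref{lemma-main-1} then yields the bracketed sum expression of the corollary statement, with the same $Y_{1,x/(n-1)}$ and $Y_{2,x}$ introduced in Theorem \ref{main-theorem}.

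Finally, combining this with the factor $1/2$ from the definition \eqref{gini-estimadtor-def-hoover} of $\widehat{H}$ delivers the claimed formula. The one point requiring a sentence of care, rather than an actual obstacle, is that for $Y_1=(n-1)X_1$ the CDF $F_{Y_{1,x/(n-1)}}$ has jumps only on the sub-lattice $\{0,n-1,2(n-1),\ldots\}$ and is constant between them; nevertheless, summing $[1-F_{Y_{1,x/(n-1)}}(k)][1-F_{Y_{2,x}}(k)]$ over all $k\in\{0,1,\ldots\}$ recovers $\int_0^\infty[1-F_{Y_{1,x/(n-1)}}(t^-)][1-F_{Y_{2,x}}(t^-)]\,{\rm d}t$ exactly, since the second factor is itself piecewise constant on unit intervals. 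The integrability hypotheses implicit in Theorem \ref{main-theorem} transfer directly, so no new convergence issue arises.
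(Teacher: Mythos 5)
Your proposal is correct and matches the paper's intended argument: the corollary is stated without proof precisely because it follows from the proof of Theorem \ref{main-theorem} verbatim with Proposition \ref{lemma-main-1} replacing Lemma \ref{lemma-main}, which is exactly what you do. Your extra remark verifying that the sum over $k\in\{0,1,\ldots\}$ agrees with the integral of $[1-F_{Y_{1,x/(n-1)}}(t^-)][1-F_{Y_{2,x}}(t^-)]$ (both tilted variables being integer-supported, so both factors are constant on each unit interval) is a worthwhile check that the paper leaves implicit.
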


\subsection*{Biased Hoover estimator in gamma populations}

Let $X_1,X_2,\ldots$ be independent copies of $X\sim\text{Gamma}(\alpha,\lambda)$.
The Laplace transform of $X$ is given by
\[
\mathscr{L}_X(x)
=
\left(\frac{\lambda}{\lambda+x}\right)^{\alpha}.
\]
Moreover,
\[
(n-1)X_1\sim \text{Gamma}\!\left(\alpha,\frac{\lambda}{\,n-1\,}\right),
\quad
\sum_{j=2}^n X_j \sim \text{Gamma}((n-1)\alpha,\lambda).
\]

Consequently, the exponentially tilted random variables $Y_{1,x/(n-1)}$ and
$Y_{2,x}$ satisfy
\[
Y_{1,{x\over n-1}} \sim \text{Gamma}\!\left(\alpha,\frac{\lambda+x}{\,n-1\,}\right),
\quad
Y_{2,x} \sim \text{Gamma}((n-1)\alpha,\lambda+x).
\]
In particular,
\[
\mathbb{E}\!\left[Y_{1,{x\over n-1}}\right]
=
\mathbb{E}\!\left[Y_{2,x}\right]
=
\frac{(n-1)\alpha}{\lambda+x},
\]
and their distribution functions are given by
\[
F_{Y_{1,{x\over n-1}}}(t)
=
1-
\frac{\Gamma\!\left(\alpha,\frac{\lambda+x}{n-1}\,t\right)}{\Gamma(\alpha)},
\quad
F_{Y_{2,x}}(t)
=
1-
\frac{\Gamma\!\left((n-1)\alpha,(\lambda+x)t\right)}{\Gamma((n-1)\alpha)}.
\]

Applying Theorem~\ref{main-theorem}, we obtain
\begin{align*}
	\mathbb{E}\!\left[\widehat{H}\right]
	&=
	\int_0^{\infty}
	\frac{\lambda^{n\alpha}}{(\lambda+x)^{n\alpha}}
	\Bigg\{
	\frac{(n-1)\alpha}{\lambda+x}
	-
	\int_0^{\infty}
	\frac{\Gamma\!\left(\alpha,\frac{\lambda+x}{n-1}t\right)}{\Gamma(\alpha)}
	\frac{\Gamma\!\left((n-1)\alpha,(\lambda+x)t\right)}{\Gamma((n-1)\alpha)}
	\,\mathrm{d}t
	\Bigg\}
	\mathrm{d}x
	\\[0.2cm]
	&=
	\left[
	\int_0^{\infty}
	\frac{\lambda^{n\alpha}}{(\lambda+x)^{n\alpha+1}}
	\,\mathrm{d}x
	\right]
	\left\{
	(n-1)\alpha
	-
	\int_0^{\infty}
	\frac{\Gamma\!\left(\alpha,\frac{w}{n-1}\right)}{\Gamma(\alpha)}
	\frac{\Gamma\!\left((n-1)\alpha,w\right)}{\Gamma((n-1)\alpha)}
	\,\mathrm{d}w
	\right\},
\end{align*}
where the change of variable $w=(\lambda+x)t$ was used.

Therefore,
\begin{equation}\label{exp-hoover-gamma}
	\mathbb{E}\!\left[\widehat{H}\right]
	=
	\frac{1}{\alpha}
	\left\{
	\left(1-\frac{1}{n}\right)\alpha
	-
	\frac{1}{n}
	\int_0^{\infty}
	\frac{\Gamma\!\left(\alpha,\frac{w}{n-1}\right)}{\Gamma(\alpha)}
	\frac{\Gamma\!\left((n-1)\alpha,w\right)}{\Gamma((n-1)\alpha)}
	\,\mathrm{d}w
	\right\}.
\end{equation}

\begin{remark}\label{remark-ant}
From \eqref{exp-hoover-gamma} we have
\[
0\leqslant \mathbb{E}(\widehat{H})\leqslant 1-\frac{1}{n}.
\]
\end{remark}

\begin{remark}
	Since the  Hoover index estimator $\widehat{H}$ is scale invariant, it is natural that its respective expectation, $\mathbb{E}[\widehat{H}]$, in Item \eqref{exp-hoover-gamma}, does not depend on the rate $\lambda$.
\end{remark}

Therefore, by combining Proposition \ref{prop-Hoover-index} and Item \eqref{exp-hoover-gamma}, we have:
\begin{corollary}\label{main-corollary}
For $n\geqslant 2$, the bias of $\widehat{H}$ relative to $H$, denoted by $\text{Bias}(\widehat{H},H)$, can be written as
\begin{align*}
\text{Bias}(\widehat{H},H)
=
	{1\over \alpha}
\left[
{\left(1-{1\over n}\right)\alpha}
-
{1\over n}
\int_0^\infty
\frac{\Gamma\!\left(\alpha,{w\over n-1}\right)}{\Gamma(\alpha)}
\frac{\Gamma\!\left((n-1)\alpha,w\right)}{\Gamma((n-1)\alpha)}
\,{\rm d}w
\right]
-
{\alpha^{\alpha-1}\exp\{-\alpha\}\over \Gamma(\alpha)}.
\end{align*}
\end{corollary}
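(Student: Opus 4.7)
The plan is essentially mechanical, since by definition
$\text{Bias}(\widehat{H},H) = \mathbb{E}[\widehat{H}] - H$,
so all that remains at this point in the paper is to subtract two expressions that have already been derived. Accordingly, the proof reduces to an assembly step rather than a new calculation.

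First, I would invoke equation \eqref{exp-hoover-gamma}, which provides the closed-form expression for $\mathbb{E}[\widehat{H}]$ when $X \sim \text{Gamma}(\alpha,\lambda)$, obtained by applying Theorem \ref{main-theorem} with the Laplace transform $\mathscr{L}_X(x) = \lambda^{\alpha}/(\lambda+x)^{\alpha}$, the identification of the tilted distributions as Gamma$\bigl(\alpha,(\lambda+x)/(n-1)\bigr)$ and Gamma$\bigl((n-1)\alpha,\lambda+x\bigr)$, and the change of variable $w = (\lambda+x)t$. Second, I would invoke Proposition \ref{prop-Hoover-index}, which identifies the population Hoover index in the gamma case as $H = \alpha^{\alpha-1}\exp\{-\alpha\}/\Gamma(\alpha)$. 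Taking the difference $\mathbb{E}[\widehat{H}] - H$ then yields the formula in the statement directly.

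There is no genuine obstacle in this step, since the technical content of the result has already been absorbed into Theorem \ref{main-theorem} (which required Lemma \ref{lemma-main} and exponential tilting), into the gamma-specific reductions preceding \eqref{exp-hoover-gamma}, and into Proposition \ref{prop-Hoover-index}. Consequently the corollary should be phrased as a transparent bookkeeping consequence, with a one-line proof pointing to the relevant equation and proposition numbers. If desired, one could additionally note that the scale invariance of $\widehat{H}$ explains why the expression depends only on $\alpha$ and $n$, not on $\lambda$, which is already observed in the remark following \eqref{exp-hoover-gamma}.
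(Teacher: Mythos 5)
Your proposal matches the paper exactly: the corollary is stated there as an immediate consequence of combining the expression for $\mathbb{E}[\widehat{H}]$ in \eqref{exp-hoover-gamma} with the value of $H$ from Proposition~\ref{prop-Hoover-index}, with no further argument given. Your reading of it as a bookkeeping subtraction $\text{Bias}(\widehat{H},H)=\mathbb{E}[\widehat{H}]-H$ is precisely the intended proof.
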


\begin{remark}
	Since
	\[
	\int_{0}^{\infty}
	\frac{\Gamma\left(\alpha,\tfrac{w}{n-1}\right)}{\Gamma(\alpha)} \,
	\frac{\Gamma\left((n-1)\alpha,w\right)}{\Gamma((n-1)\alpha)}
	\,\mathrm{d}w
	=
	\mathbb{E}\left[\min\big\{(n-1)U, V\big\}\right],
	\]
	where $
	U \sim \mathrm{Gamma}(\alpha,1)$ and 
	$V \sim \mathrm{Gamma}((n-1)\alpha,1)$
	are independent, we get
	\begin{align*}
		\text{Bias}(\widehat{H},H)
		=
		{1\over \alpha}
		\left\{
		{\left(1-{1\over n}\right)\alpha}
		-
		{1\over n} \,
\mathbb{E}\left[\min\big\{(n-1)U, V\big\}\right]
		\right\}
		-
		{\alpha^{\alpha-1}\exp\{-\alpha\}\over \Gamma(\alpha)}.
	\end{align*}
\end{remark}

\begin{remark}
	From Remark \ref{remark-ant} we have
	\begin{align*}
		-
		{\alpha^{\alpha-1}\exp\{-\alpha\}\over \Gamma(\alpha)}
		\leqslant
\text{Bias}(\widehat{H},H)
\leqslant
1-\frac{1}{n}
-
{\alpha^{\alpha-1}\exp\{-\alpha\}\over \Gamma(\alpha)}.
	\end{align*}
\end{remark}

\begin{proposition}\label{prop-ad-1}
When $n=2$, we have:
\begin{align*}
		\mathbb{E}\left[\widehat{H}\right]
				=
		{1\over 2} \,
		\mathbb{E}\left[\widehat{G}\right]
		=
		{1\over 2} \,
		{\Gamma(\alpha+{1\over 2})\over \sqrt{\pi}\alpha\Gamma(\alpha)}
		=
				{1\over 2} \,G,
\end{align*}
where
\begin{align}\label{gini-estimadtor-def}
	\widehat{G}
	=
	{1\over n-1} 
	\left[\dfrac{\displaystyle\sum_{1\leqslant i<j\leqslant n}\vert X_i-X_j\vert}{\displaystyle\sum_{i=1}^{n} X_i}\right],
	\quad 
	n\in\mathbb{N}, \, n\geqslant 2,
\end{align}
 is the Gini coefficient estimator, proposed by \cite{Deltas2003}, and
\begin{align}\label{Gini-coefficient}
	G={1\over 2}\, {\mathbb{E}\vert X_1-X_2\vert\over\mathbb{E}[X]},
\end{align}
with $X_1$ and $X_2$ being independent copies of $X$, is the (populational) Gini coefficient \citep{Gini1936} of a random variable $X$ with finite mean $\mathbb{E}[X]$.
\end{proposition}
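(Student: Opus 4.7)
My plan is to handle the $n=2$ case by first collapsing $\widehat{H}$ onto $\widehat{G}$ pointwise, and then to exploit the beta--gamma conjugacy to evaluate the resulting expectation in closed form. I would split the proof into three logical steps, each corresponding to one of the three equalities in the statement.

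First, I would observe that for $n=2$ the sample mean $\overline{X}=(X_1+X_2)/2$ satisfies $|X_1-\overline{X}|=|X_2-\overline{X}|=|X_1-X_2|/2$, so $\sum_{i=1}^{2}|X_i-\overline{X}|=|X_1-X_2|$. Plugging this into definitions \eqref{gini-estimadtor-def-hoover} and \eqref{gini-estimadtor-def} gives the pointwise identity $\widehat{H}=\tfrac{1}{2}\widehat{G}$ almost surely, which, upon taking expectations, yields the first equality $\mathbb{E}[\widehat{H}]=\tfrac{1}{2}\mathbb{E}[\widehat{G}]$.

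Next, to obtain the last equality $\tfrac{1}{2}\mathbb{E}[\widehat{G}]=\tfrac{1}{2}G$, I would invoke beta--gamma conjugacy. Setting $U=X_1/(X_1+X_2)$ gives $\widehat{G}=|2U-1|$ when $n=2$. The classical fact that $U\sim\mathrm{Beta}(\alpha,\alpha)$, independent of $S=X_1+X_2$, together with the identity $|X_1-X_2|=S\,|2U-1|$, yields $\mathbb{E}|X_1-X_2|=\mathbb{E}[S]\,\mathbb{E}|2U-1|=(2\alpha/\lambda)\,\mathbb{E}|2U-1|$. Substituting into definition \eqref{Gini-coefficient} of $G$ shows $G=\mathbb{E}|2U-1|=\mathbb{E}[\widehat{G}]$, so in particular $\widehat{G}$ is exactly unbiased for $G$ in the gamma case with $n=2$.

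Finally, I would evaluate $\mathbb{E}|2U-1|$ in closed form to obtain the middle equality. By symmetry of the $\mathrm{Beta}(\alpha,\alpha)$ density around $1/2$,
\[
\mathbb{E}|2U-1|=\frac{2}{B(\alpha,\alpha)}\int_{1/2}^{1}(2u-1)\,u^{\alpha-1}(1-u)^{\alpha-1}\,\mathrm{d}u.
\]
The substitution $v=2u-1$ followed by $w=v^{2}$ reduces this to a constant times $\int_{0}^{1}(1-w)^{\alpha-1}\mathrm{d}w=1/\alpha$, leaving $\mathbb{E}|2U-1|=[\alpha\,2^{2\alpha-1}B(\alpha,\alpha)]^{-1}$. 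Legendre's duplication formula $\Gamma(2\alpha)=2^{2\alpha-1}\Gamma(\alpha)\Gamma(\alpha+\tfrac{1}{2})/\sqrt{\pi}$ then gives $2^{2\alpha-1}B(\alpha,\alpha)=\sqrt{\pi}\,\Gamma(\alpha)/\Gamma(\alpha+\tfrac{1}{2})$, whence $\mathbb{E}|2U-1|=\Gamma(\alpha+\tfrac{1}{2})/[\sqrt{\pi}\,\alpha\,\Gamma(\alpha)]$, completing the chain. The main technical step is this two-step change of variables; once carried out, invoking the duplication formula is immediate, and no appeal to the general machinery of Theorem~\ref{main-theorem} is needed.
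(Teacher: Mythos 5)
Your proof is correct, and it differs from the paper's in a meaningful way. The paper's argument consists of exactly your first step (the pointwise identity $\sum_{i=1}^{2}|X_i-\overline{X}|=|X_1-X_2|$, hence $\widehat{H}=\widehat{G}/2$ for $n=2$) followed by a citation: the facts that $\mathbb{E}[\widehat{G}]=G$ and that $G=\Gamma(\alpha+\tfrac12)/[\sqrt{\pi}\,\alpha\,\Gamma(\alpha)]$ for the gamma population are simply quoted from Baydil et al.\ (2025) and Vila and Saulo (2025). You instead make the proposition self-contained by proving both quoted facts from scratch: the Lukacs independence of $U=X_1/(X_1+X_2)\sim\mathrm{Beta}(\alpha,\alpha)$ and $S=X_1+X_2$ gives both $\widehat{G}=|2U-1|$ and $G=\mathbb{E}|2U-1|$ at once (so unbiasedness for $n=2$ falls out for free), and your two-step substitution $v=2u-1$, $w=v^2$ together with Legendre's duplication formula correctly yields $\mathbb{E}|2U-1|=\Gamma(\alpha+\tfrac12)/[\sqrt{\pi}\,\alpha\,\Gamma(\alpha)]$ (I verified the computation: the integral reduces to $[\alpha\,2^{2\alpha-1}B(\alpha,\alpha)]^{-1}$ and the duplication formula converts this to the stated form). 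What your route buys is independence from external references and a transparent explanation of \emph{why} the $n=2$ estimator is unbiased; what the paper's route buys is brevity, since the cited works establish unbiasedness of $\widehat{G}$ for general $n$, of which this is a special case. Neither proof uses Theorem~\ref{main-theorem}, so your closing remark is consistent with the paper's approach.
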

\begin{proof}
	Note that, for $n=2$, we have $\widehat{H}=\widehat{G}/2$. Since $\mathbb{E}[\widehat{G}]=G=	{\Gamma(\alpha+{1/2})/ [\sqrt{\pi}\alpha\Gamma(\alpha)]}$ \cite[see][]{Baydil2025,Vila-Saulo2025}, the result follows immediately.
\end{proof}

\section{Illustrative simulation study}\label{Illustrative simulation study}

This section reports a Monte Carlo study assessing the finite-sample performance of the Hoover
estimator in~\eqref{gini-estimadtor-def-hoover} (uncorrected) and its bias-corrected version obtained by subtracting
an estimate of the finite-sample bias. The correction is implemented in a plug-in fashion using
maximum likelihood (ML) estimation of the distributional parameter.

Let $X_1,\ldots,X_n$ be an i.i.d. sample from a gamma distribution with finite mean.
The uncorrected estimator is given in \eqref{gini-estimadtor-def-hoover}, that is,
\begin{align*}
	\widehat{H}
	=
	\displaystyle
	{1\over 2} 
	\left[\dfrac{\displaystyle\sum_{i=1}^{n}\vert X_i-\overline{X}\vert}{\displaystyle\sum_{i=1}^{n} X_i}\right] \mathds{1}_{\left\{\sum_{i=1}^{n} X_i>0\right\}},
\end{align*}
where $\overline{X}=\sum_{k=1}^{n} X_k / n$ denotes the sample mean. The bias-corrected estimator is defined as
\begin{equation}\label{eq:hoover_hat_corr}
\widehat H^{\,c}
=
\widehat H - \widehat{\mathrm{Bias}}_n(\widehat\alpha),
\end{equation}
where $\widehat\alpha$ is the ML estimator of the shape parameter and
$\widehat{\mathrm{Bias}}_n(\widehat\alpha)$ is obtained by evaluating the analytical bias expression of
$\widehat H$ given in Corollary~4.5 at $\widehat\alpha$ and the given sample size $n$. We consider $X \sim \mathrm{Gamma}(\alpha,1)$ with $\alpha \in \{0.5,1,1.5,2,5\}$. For each Monte Carlo replication and each pair $(n,\alpha)$,
the ML estimator $\widehat\alpha$ is obtained by maximizing the gamma likelihood, and the plug-in
bias correction is computed using the analytical expression in Corollary~4.5. Sample sizes are set to $n \in \{25,50,75,100\}$, and each configuration is replicated $R=2{,}000$ times.
For each replication $r$, we compute $(\widehat H_r,\widehat H^{\,c}_r)$.

Performance is summarized using the relative bias and the root mean squared error (RMSE),
defined respectively as
\[
\mathrm{RelBias}(\widehat H_e)=
\frac{1}{R}\sum_{r=1}^R\frac{\widehat H_{e,r}-H}{H},
\qquad
\mathrm{RMSE}(\widehat H_e)=
\left(\frac{1}{R}\sum_{r=1}^R(\widehat H_{e,r}-H)^2\right)^{1/2},
\]
where $\widehat H_e$ denotes either $\widehat H$ or $\widehat H^{\,c}$ and $H$ is the true Hoover index
under the gamma model.

Figure~\ref{fig:gamma_relbias} presents the corresponding relative bias curves. For all values of
$\alpha$, the uncorrected estimator exhibits a clear negative bias that is more pronounced for
small samples. The bias-corrected estimator effectively removes most of this distortion, leading
to relative bias values that fluctuate around zero even for moderate sample sizes. As expected,
the magnitude of the relative bias decreases with $n$ for both estimators, but the improvement
provided by the bias correction is substantial in small samples.

\begin{figure}[!ht]
\centering
\includegraphics[width=0.90\textwidth]{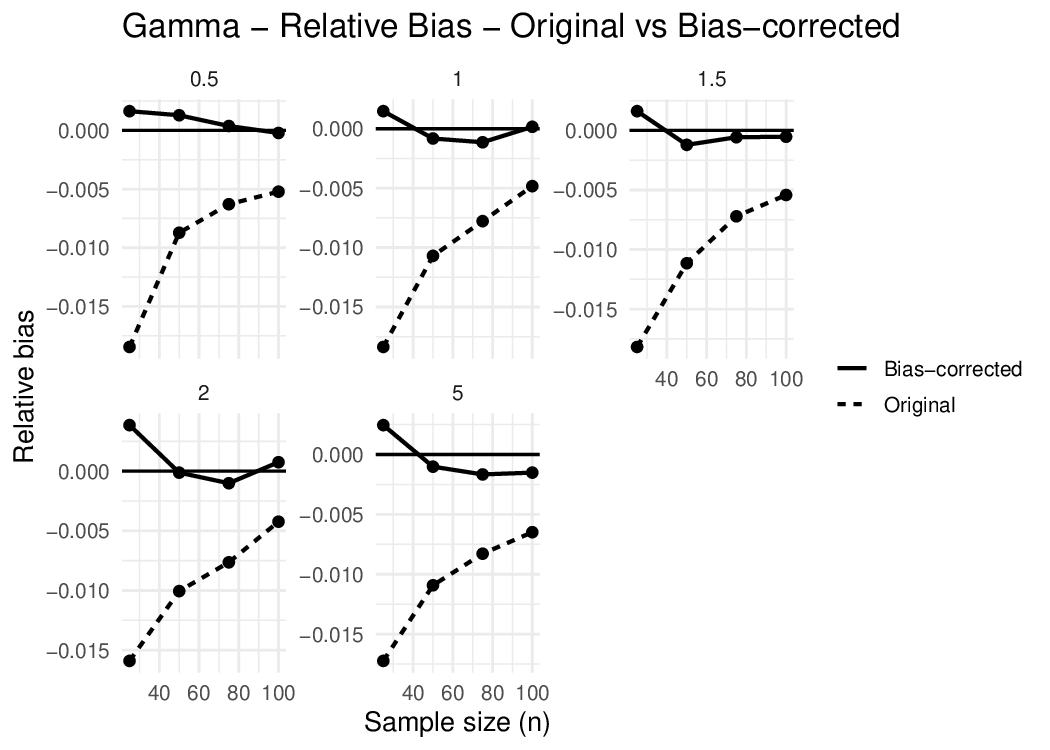}
\caption{Relative bias of the Hoover estimator under the gamma distribution.
Solid line: bias-corrected estimator; dashed line: original estimator.}
\label{fig:gamma_relbias}
\end{figure}

Figure~\ref{fig:gamma_rmse} displays the RMSE of the uncorrected and
bias-corrected Hoover estimators under the gamma model. For all values of the shape parameter
$\alpha$, the RMSE decreases monotonically as the sample size increases, reflecting the
consistency of both estimators. The bias-corrected version exhibits RMSE values that are very
close to those of the original estimator, indicating that the reduction in bias is not achieved
at the expense of a noticeable increase in variability. The two curves
are almost indistinguishable for all sample sizes, which highlights the
effectiveness of the proposed correction in improving accuracy while preserving efficiency.

\begin{figure}[!ht]
\centering
\includegraphics[width=0.90\textwidth]{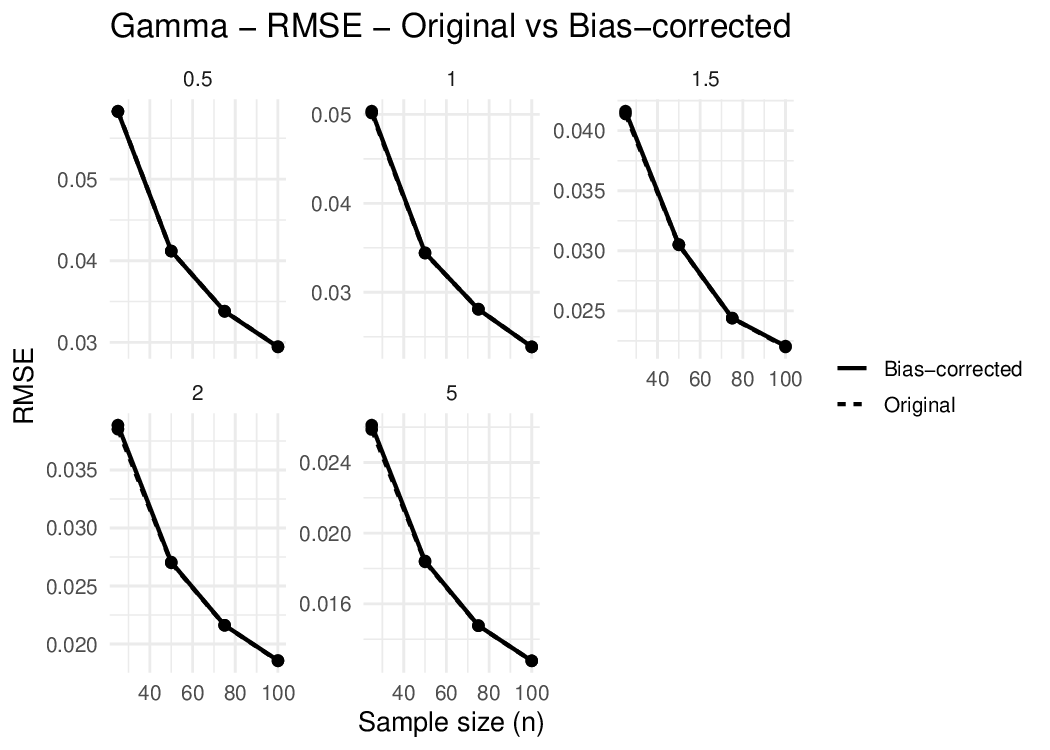}
\caption{RMSE of the Hoover estimator under the gamma distribution.
Solid line: bias-corrected estimator; dashed line: original estimator.}
\label{fig:gamma_rmse}
\end{figure}

\section{Concluding remarks}\label{Concluding remarks}

This paper investigated the finite-sample bias of the Hoover index estimator. By exploiting integral representations based on Laplace transforms and exponential tilting, we derived a general simple expression for its expected value under non-negative populations. Explicit bias formulas were obtained for gamma distribution, showing that the estimator is generally biased in finite samples and that the bias depends on both the sample size and the underlying distribution. The results provide a theoretical basis for numerical bias assessment and suggest the need for bias correction when the Hoover index is used in empirical applications. 
An important direction for future research concerns the relationship between the Hoover index and skewed generalizations of the normal distribution. While the present paper focuses on the sample bias of the Hoover index estimator using exponential tilting and Laplace transform techniques, recent developments indicate that inequality measures may benefit from probabilistic representations based on skew-normal and skew-elliptical families; see \cite{AzzaliniDallaValle1996,BrancoDey2001,Azzalini2014}. In particular, the Gini index has been successfully linked to closed skew-normal and unified skew-elliptical distributions, which has renewed interest in its sampling properties and inferential behavior \cite{CrocettaLoperfido2005,ArellanoValle2014}. A similar connection for the Hoover index appears promising. For instance, when sampling from a bivariate elliptical distribution, the numerator of the Hoover index follows a univariate skew-elliptical distribution, as a direct consequence of results on linear functions of order statistics \cite{Loperfido2008}. In the Gaussian case, the same statistic is expected to follow a closed skew-normal distribution, by arguments analogous to those employed in the study of the Gini index \cite{CrocettaLoperfido2005,DominguezMolina2003}. Establishing such connections may provide new insights into the distributional properties of the Hoover index, potentially enhancing its applicability and relevance in both theoretical and applied contexts. Although these developments fall outside the scope of the present work, they represent a natural and worthwhile avenue for future investigation.

%\clearpage

%\clearpage
%
%
%	\paragraph*{Acknowledgements}
%The research was supported in part by CNPq and CAPES grants from the Brazilian government.
%	
%	\paragraph*{Disclosure statement}
%	There are no conflicts of interest to disclose.

	%%%%%%%%%%%%%%%%%%%%%%%%%%%%%%%%%%%%%%%%%%%%%%%%%%%%%%%%%%%%%

%\clearpage

\paragraph*{Acknowledgements}
The research was supported in part by CNPq and CAPES grants from the Brazilian government.

\paragraph*{Disclosure statement}
There are no conflicts of interest to disclose.

%%%%%%%%%%%%%%%%%%%%%%%%%%%%%%%%%%%%%%%%%%%%%%%%%%%%%%%%%%%%%

%\bibliographystyle{unsrt}
\bibliographystyle{apalike}
%	\bibliography{BibTexFileName}

% Start the appendix
\appendix
\section{Appendix}
\label{sec:appendix_a}

In this section, we apply Corollary \ref{main-theorem-coro} to derive explicit expressions for the bias of the Hoover estimator when the underlying population is discrete, focusing on the Poisson and geometric distributions. We also highlight that an illustrative simulation study, similar to the one presented for the gamma distribution (see Section \ref{Illustrative simulation study}), can be carried out for these discrete settings as well, providing further numerical insight into the behavior of the estimator.

\subsection{Biased Hoover estimator in Poisson populations}

Consider independent copies $X_1, X_2, \dots$ of $X \sim \text{Poisson}(\lambda)$. Its Laplace transform is
\[
\mathscr{L}_X(x) = \exp\{\lambda (\exp\{-x\} - 1)\}.
\]

For the sum of the remaining variables, we have
\[
\sum_{j=2}^n X_j \sim \text{Poisson}((n-1)\lambda).
\]

Hence, the exponentially tilted random variables $Y_{1,x/(n-1)}$ and $Y_{2,x}$ satisfy
\[
Y_{1,{x\over n-1}} = (n-1) X_{1,x}, \quad \text{where } X_{1,x} \sim \text{Poisson}(\lambda \exp\{-x\}),
\]
and
\[
Y_{2,x} \sim \text{Poisson}((n-1)\lambda \exp\{-x\}).
\]

Both $Y_{1,x/(n-1)}$ and $Y_{2,x}$ share the same expectation:
\[
\mathbb{E}\left[Y_{1,{x\over n-1}}\right] = \mathbb{E}[Y_{2,x}] = (n-1) \lambda \exp\{-x\}.
\]

Their CDFs can be expressed in terms of the lower incomplete gamma function:
\[
F_{Y_{1,{x\over n-1}}}(k) = F_{X_1,x}\left(\frac{k}{n-1}\right) = 1 - \frac{\gamma\left(\left\lfloor {k\over n-1} \right\rfloor + 1, \lambda \exp\{-x\}\right)}{\Gamma\left(\left\lfloor {k\over n-1} \right\rfloor + 1\right)},
\]
and
\[
F_{Y_{2,x}}(k) = 1 - \frac{\gamma(k+1, (n-1)\lambda \exp\{-x\})}{\Gamma(k+1)}.
\]

Applying Corollary \ref{main-theorem-coro}, the expectation of $\widehat{H}$ can be written as
\begin{align}\label{exp-H-P}
	\mathbb{E}\left[\widehat{H}\right]
	&= \int_0^{\infty} \exp\{n \lambda (\exp\{-x\} - 1)\}
	\Bigg[
	(n-1) \lambda \exp\{-x\} \nonumber
	\\[0,2cm]
	&- \sum_{k=0}^{\infty}
	\frac{\gamma\left(\left\lfloor {k\over n-1}\right\rfloor + 1, \lambda \exp\{-x\}\right)}{\Gamma\left(\left\lfloor {k\over n-1}\right\rfloor + 1\right)}
	\frac{\gamma(k+1, (n-1)\lambda \exp\{-x\})}{\Gamma(k+1)}
	\Bigg] {\rm d}x \nonumber
	\\[0,2cm]
	&= \exp\{-n \lambda\} \int_0^{\lambda} \frac{\exp\{n w\}}{w}
	\Bigg[
	(n-1) w - \sum_{k=0}^{\infty}
	\frac{\gamma\left(\left\lfloor {k\over n-1}\right\rfloor + 1, w\right)}{\Gamma\left(\left\lfloor {k\over n-1}\right\rfloor + 1\right)}
	\frac{\gamma(k+1, (n-1) w)}{\Gamma(k+1)}
	\Bigg] {\rm d}w,
\end{align}
where the last expression follows from the change of variable $w = \lambda \exp\{-x\}$.

\begin{remark}\label{remark-ant-2-1}
	From \eqref{exp-H-P} we have
	\begin{align*}
		0\leqslant \mathbb{E}\left[\widehat{H}\right]\leqslant \left(1-{1\over n}\right)
		[1-\exp\{-n\lambda\}].
	\end{align*}
\end{remark}

By combining Proposition \ref{Hoover-Poisson} and Item \eqref{exp-H-P}, we have:
\begin{corollary}\label{main-corollary-2}
	For $n\geqslant 2$, the bias of $\widehat{H}$ relative to $H$, denoted by $\text{Bias}(\widehat{H},H)$, can be written as
	\begin{align*}
		\text{Bias}(\widehat{H},H)
		&=
		\exp\{-n\lambda\}
		\int_0^{\lambda}
		{\exp\{nw\}\over w}
		\Bigg\{
		(n-1)w
		-
		\sum_{k=0}^{\infty}
		{\gamma\left(\left\lfloor{k\over n-1}\right\rfloor+1,w\right)\over\Gamma\left(\left\lfloor{k\over n-1}\right\rfloor+1\right)}
		\,
		{\gamma(k+1,(n-1)w)\over\Gamma(k+1)}
		\Bigg\}
		{\rm d}w
		\\[0,2cm]
		&-
		{1\over\lambda}
		\left[
		\sum_{k=0}^{\lfloor \lambda\rfloor-1}
		{\Gamma(k+1,\lambda)\over\Gamma(k+1)}+(\lambda-\lfloor\lambda\rfloor)
		{\Gamma(\lfloor \lambda\rfloor+1,\lambda)\over\Gamma(\lfloor \lambda\rfloor+1)}
		\right].
	\end{align*}
\end{corollary}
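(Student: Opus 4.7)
The plan is to derive the bias via the elementary identity $\text{Bias}(\widehat{H},H) = \mathbb{E}[\widehat{H}] - H$ and then substitute the two Poisson-specific expressions that have already been established in the preceding discussion. Concretely, I would take the integral expression for $\mathbb{E}[\widehat{H}]$ given in the second line of \eqref{exp-H-P} (the one obtained after the change of variables $w = \lambda\exp\{-x\}$) and subtract the closed form for the Hoover index of a Poisson population from Proposition \ref{Hoover-Poisson}. Since the statement of the corollary is written in exactly this assembled form, no algebraic simplification is required after the substitution; the result is immediate.

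The substantive work has already been carried out in deriving \eqref{exp-H-P} via Corollary \ref{main-theorem-coro}. The key ingredients there are: identifying the Laplace transform $\mathscr{L}_X(x) = \exp\{\lambda(\exp\{-x\}-1)\}$; recognizing that the exponentially tilted Poisson with tilt $x$ is still Poisson with modified rate $\lambda\exp\{-x\}$, and that the $(n-1)$-fold convolution behaves analogously; writing the resulting CDFs through the regularized lower incomplete gamma function; and executing the change of variables $w = \lambda\exp\{-x\}$ on the outer integral, which turns the range $(0,\infty)$ into $(0,\lambda)$ and produces the $\exp\{nw\}/w$ factor together with the prefactor $\exp\{-n\lambda\}$. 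Given that all of this is already established in the block leading to \eqref{exp-H-P}, the corollary itself is essentially a bookkeeping step.

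The only mild obstacle is notational. Care must be taken to keep straight the role of $\lfloor k/(n-1)\rfloor$ inside the first incomplete gamma term (coming from $Y_{1,x/(n-1)} = (n-1)X_{1,x}$, which lives on the sublattice $\{0, n-1, 2(n-1), \ldots\}$) versus the plain index $k+1$ in the second gamma term (coming from $Y_{2,x}\sim \text{Poisson}((n-1)\lambda\exp\{-x\})$ on $\{0,1,2,\ldots\}$). Similarly, in the Hoover-index piece one must retain the $\lfloor\lambda\rfloor$ and the fractional remainder $\lambda - \lfloor\lambda\rfloor$ separately, as dictated by Proposition \ref{pro-2}. Once these indexing conventions are carried over verbatim, combining \eqref{exp-H-P} with Proposition \ref{Hoover-Poisson} by pure subtraction yields the displayed expression and completes the proof.
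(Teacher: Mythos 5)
Your proposal is correct and matches the paper exactly: the corollary is obtained by writing $\text{Bias}(\widehat{H},H)=\mathbb{E}[\widehat{H}]-H$ and substituting the already-derived expression \eqref{exp-H-P} together with Proposition \ref{Hoover-Poisson}, with no further manipulation. The paper likewise treats this as an immediate consequence of combining those two results.
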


\begin{remark}
	From Remark \ref{remark-ant-2-1} we have
	\begin{multline*}
		-
		{1\over\lambda}
		\left[
		\sum_{k=0}^{\lfloor \lambda\rfloor-1}
		{\Gamma(k+1,\lambda)\over\Gamma(k+1)}+(\lambda-\lfloor\lambda\rfloor)
		{\Gamma(\lfloor \lambda\rfloor+1,\lambda)\over\Gamma(\lfloor \lambda\rfloor+1)}
		\right]
		\\[0,2cm]
		\leqslant
		\text{Bias}(\widehat{H},H)
		\leqslant
		\left(1-{1\over n}\right)
		[1-\exp\{-n\lambda\}]
		-
		{1\over\lambda}
		\left[
		\sum_{k=0}^{\lfloor \lambda\rfloor-1}
		{\Gamma(k+1,\lambda)\over\Gamma(k+1)}+(\lambda-\lfloor\lambda\rfloor)
		{\Gamma(\lfloor \lambda\rfloor+1,\lambda)\over\Gamma(\lfloor \lambda\rfloor+1)}
		\right].
	\end{multline*}
\end{remark}

\subsection{Biased Hoover estimator in geometric populations}

Let $X_1, X_2, \dots$ be independent copies of $X \sim \text{Geometric}(p)$. Its Laplace transform is
\[
\mathscr{L}_X(x) = \frac{p}{1 - (1-p)\exp\{-x\}}.
\]

Moreover, the sum of the remaining variables satisfies
\[
\sum_{j=2}^{n} X_j \sim \text{NegBin}(n-1, p).
\]

Consequently, the exponentially tilted random variables $Y_{1,x/(n-1)}$ and $Y_{2,x}$ are
\[
Y_{1, {x\over n-1}} = (n-1) X_{1,x}, \quad \text{where } X_{1,x} \sim \text{Geometric}(1 - (1-p)\exp\{-x\}),
\]
and
\[
Y_{2,x} \sim \text{NegBin}(n-1, 1 - (1-p)\exp\{-x\}).
\]

In particular, their expectations coincide:
\[
\mathbb{E}\left[Y_{1,{x\over n-1}}\right] = \mathbb{E}\big[Y_{2,x}\big] = (n-1) \frac{(1-p)\exp\{-x\}}{1 - (1-p)\exp\{-x\}}.
\]

Their distribution functions can be expressed as
\[
F_{Y_{1,{x\over n-1}}}(k) = F_{X_1,x}\left(\frac{k}{n-1}\right) = 1 - \left[(1-p) \exp\{-x\}\right]^{\left\lfloor {k\over n-1} \right\rfloor + 1},
\]
and
\[
F_{Y_{2,x}}(k) = 1 - I_{(1-p)\exp\{-x\}}(k+1, n-1),
\]
where $I_p(r,s)$ denotes the regularized incomplete beta function.

Applying Corollary \ref{main-theorem-coro}, the expected value of $\widehat{H}$ can be written as
\[
\begin{aligned}
	\mathbb{E}\left[\widehat{H}\right]
	&= \int_0^{\infty} \frac{p^n}{\big[1-(1-p)\exp\{-x\}\big]^n}
	\Bigg\{
	(n-1)\frac{(1-p)\,\exp\{-x\}}{1-(1-p)\exp\{-x\}}
	\\[0,2cm]
	&- \sum_{k=0}^{\infty} \big[(1-p)\exp\{-x\}\big]^{\left\lfloor {k\over n-1} \right\rfloor + 1}
	I_{(1-p)\exp\{-x\}}(k+1,n-1)
	\Bigg\} {\rm d}x
	\\[0,2cm]
	&= \int_0^{1-p} \frac{p^n}{w(1-w)^n}
	\Bigg\{
	(n-1)\frac{w}{1-w} - \sum_{k=0}^{\infty} w^{\left\lfloor {k\over n-1} \right\rfloor + 1} I_w(k+1,n-1)
	\Bigg\} {\rm d}w,
\end{aligned}
\]
where the last equality follows from the change of variable $w = (1-p)\exp\{-x\}$.

Finally, this expression can also be rewritten as
\begin{align}\label{exp-hoover-gamma-1}
	\mathbb{E}\left[\widehat{H}\right] = \left(1 - \frac{1}{n}\right)(1-p^n) - p^n \int_0^{1-p}
	\sum_{k=0}^{\infty} \frac{w^{\left\lfloor {k\over n-1} \right\rfloor} I_w(k+1,n-1)}{(1-w)^n} \, {\rm d}w.
\end{align}

\begin{remark}\label{remark-ant-2}
	From \eqref{exp-hoover-gamma-1} we have
	\begin{align*}
		0
		\leqslant
		\mathbb{E}\left[\widehat{H}\right]
		\leqslant
		\left(1-{1\over n}\right)(1-p^n).
	\end{align*}
\end{remark}

By combining Proposition \ref{Hoover-geo} and Item \eqref{exp-hoover-gamma-1}, we have:
\begin{corollary}\label{main-corollary-1}
	For $n\geqslant 2$, the bias of $\widehat{H}$ relative to $H$, denoted by $\text{Bias}(\widehat{H},H)$, can be written as
	\begin{align*}
		\text{Bias}(\widehat{H},H)
		=
		\left(1-{1\over n}\right)(1-p^n)
		-
		p^n
		\int_0^{1-p}
		\sum_{k=0}^{\infty}
		\frac{w^{\left\lfloor{k\over n-1}\right\rfloor}
			I_{w}(k+1,n-1)}{(1-w)^n} \,
		{\rm d}w
		-
		p\left(1+\left\lfloor {1-p\over p}\right\rfloor\right)(1-p)^{\lfloor {1-p\over p}\rfloor}.
	\end{align*}
\end{corollary}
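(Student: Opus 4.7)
The plan is to obtain the bias by direct subtraction, using the two ingredients already established in the paper. By definition, $\text{Bias}(\widehat{H},H) = \mathbb{E}[\widehat{H}] - H$, so the corollary is essentially an assembly statement rather than a result requiring new technical input; all the work has been done upstream.

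First I would invoke the closed-form expression for $\mathbb{E}[\widehat{H}]$ derived in \eqref{exp-hoover-gamma-1}, namely
\[
\mathbb{E}[\widehat{H}] = \left(1-\tfrac{1}{n}\right)(1-p^n) - p^n \int_0^{1-p} \sum_{k=0}^{\infty} \frac{w^{\lfloor k/(n-1)\rfloor} I_w(k+1,n-1)}{(1-w)^n}\,{\rm d}w,
\]
which was obtained from Corollary \ref{main-theorem-coro} via the identifications of the tilted laws of $(n-1)X_1$ and $\sum_{j=2}^n X_j$ as a scaled geometric and a negative binomial, respectively, and the change of variable $w = (1-p)\exp\{-x\}$.

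Second, I would quote the populational Hoover index from Proposition \ref{Hoover-geo},
\[
H = p\left(1+\left\lfloor \tfrac{1-p}{p}\right\rfloor\right)(1-p)^{\lfloor (1-p)/p\rfloor}.
\]

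Finally, subtracting $H$ from $\mathbb{E}[\widehat{H}]$ yields the stated formula immediately, and the proof concludes. There is no genuine obstacle here: the only subtlety worth flagging is that the integral representation in \eqref{exp-hoover-gamma-1} was written under the implicit assumption that the series and integral converge and admit interchange via Tonelli, which is justified by nonnegativity of the summands and integrability of $(1-w)^{-n}$ on $[0,1-p]$ (since $1-p < 1$). No further manipulation of the incomplete beta terms or the floor function is needed.
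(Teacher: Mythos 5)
Your proposal is correct and matches the paper's own argument exactly: the paper likewise obtains Corollary \ref{main-corollary-1} by subtracting the population Hoover index of Proposition \ref{Hoover-geo} from the expectation formula in \eqref{exp-hoover-gamma-1}. The additional remark on nonnegativity justifying the series--integral interchange is a sensible (if implicit in the paper) point, but nothing beyond the direct combination is needed.
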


\begin{remark}
	From Remark \ref{remark-ant-2} we have
	\begin{multline*}
		-
		p\left(1+\left\lfloor {1-p\over p}\right\rfloor\right)(1-p)^{\lfloor {1-p\over p}\rfloor}
		\leqslant
		\text{Bias}(\widehat{H},H)
		\leqslant
		\left(1-{1\over n}\right)(1-p^n)
		-
		p\left(1+\left\lfloor {1-p\over p}\right\rfloor\right)(1-p)^{\lfloor {1-p\over p}\rfloor}.
	\end{multline*}
\end{remark}

\begin{remark}
	Note that the integrals in Corollaries \ref{main-corollary}, \ref{main-corollary-2}, and \ref{main-corollary-1} lack closed-form representations in terms of standard special functions, but can be efficiently computed numerically.
\end{remark}

%\begin{align*}
%	\widehat{T}
%	=
%	{	\displaystyle 
	%	{1\over n}\sum_{i=1}^{n}\vert X_i-\overline{X}\vert^p
	%	\over 
	%		\displaystyle 
	%	2^p \, {1\over n}\sum_{i=1}^{n} \vert X_i\vert^p},
%	\quad p\geqslant 1.
%\end{align*}

%\begin{remark}
%\begin{align*}
%	\mathbb{E}(\widehat{G})
%	=
%	&
%	1-
%	{
	%	4-4n-2np+4(n-1)\, _2F_1(1,-n;1-n;{p\over 2})
	%	-
	%	(n-1)\, _2F_1(2,-n;1-n;{p\over 2})
	%		\over
	%	2(n-1)(p-1)
	%	}
%	\\[0,2cm]
%	&+
%	p^n \,
%	{
	%	4-6n+
	%	4(n-1)\, _2F_1(1,-n;1-n;{1\over 2})
	%		-
	%	(n-1)\, _2F_1(2,-n;1-n;{1\over 2})
	%		\over
	%		2(n-1)(p-1)
	%	}
%		\\[0,2cm]
%	&=
%		{
	%		p^n
	%		\over 1-p} \, 
%		\bigg\{
%	{3n-2\over n-1} 
%		-
%	{1+p\over p^n}
%	+
%	2
%	\left[
%		{1\over p^{n}}
%	\, _2F_1\left(1,-n;1-n;{p\over 2}\right) -\, _2F_1\left(1,-n;1-n;{1\over 2}\right) 
%	\right]
%			\\[0,2cm]
%	&-
%	{1\over 2} 
%	\left[
%		{1\over p^{n}}
%	\, _2F_1\left(2,-n;1-n;{p\over 2}\right)-\, _2F_1\left(2,-n;1-n;{1\over 2}\right) 
%	\right]		
%	\bigg\}
%\end{align*}
%
%\begin{align*}
%\mathbb{E}(\widehat{G})
%=
%{
	%1
	%\over 1-p} \, 
%\Bigg\{
%{3n-2\over n-1}  \, 	p^n
%-
%(p+1)
%&+
%2
%\left[
%\, _2F_1\left(1,-n;1-n;{p\over 2}\right) -
%p^n \, _2F_1\left(1,-n;1-n;{1\over 2}\right) 
%\right]
%\\[0,2cm]
%&-
%{1\over 2} 
%\left[
%\, _2F_1\left(2,-n;1-n;{p\over 2}\right)-
%p^n \, _2F_1\left(2,-n;1-n;{1\over 2}\right) 
%\right]		
%\Bigg\}.
%\end{align*}
%\end{remark}

%\typeout{}
%\bibliography{ref}

\end{document}